\documentclass[11pt,amssymb,amsfont,a4paper]{article}
\usepackage{latexsym,amssymb,amsmath,amsthm,color}
\usepackage{geometry}
%\usepackage[notcite,notref]{showkeys}
%\geometry{left=2cm,top=2cm,right=2cm,bottom=2cm}
%\usepackage{setspace}
%\onehalfspacing
%\usepackage{graphics}
\usepackage{url}
\usepackage{graphicx}
\usepackage[utf8]{inputenc}
\usepackage[numbers]{natbib}
\usepackage{authblk}
\usepackage{tikz}
\usetikzlibrary{shapes}
%\usepackage{pstricks,pst-node}
%\usepackage{times}
%\usepackage{listings}
%\lstset{
%   breaklines=true,
%   basicstyle=\ttfamily}
\usepackage{arydshln}

\theoremstyle{definition}
\newtheorem{definition}{Definition}

\theoremstyle{plain}
\newtheorem{theorem}{Theorem}
\newtheorem{proposition}[definition]{Proposition}
\newtheorem{lemma}[definition]{Lemma}
\newtheorem{remark}[definition]{Remark}
\newtheorem{corollary}[definition]{Corollary}

\title{Doubly and triply extended MSRD codes}%\thanks{Parts of this paper have been presented at the XVII Encuentro {\'A}lgebra Computacional y Aplicaciones (EACA 2022), Castell{\'o}n, Spain, June, 2022.}}
\author{Umberto Mart{\'i}nez-Pe\~{n}as \thanks{umberto.martinez@uva.es}}
\affil{IMUVa-Mathematics Research Institute,\\University of Valladolid, Spain}

\date{}

\begin{document}

\maketitle

\begin{abstract}
In this work, doubly extended linearized Reed--Solomon codes and triply extended Reed--Solomon codes are generalized. We obtain a general result in which we characterize when a multiply extended code for a general metric attains the Singleton bound. We then use this result to obtain several families of doubly extended and triply extended maximum sum-rank distance (MSRD) codes that include doubly extended linearized Reed--Solomon codes and triply extended Reed--Solomon codes as particular cases. To conclude, we discuss when these codes are one-weight codes.

\textbf{Keywords:} Doubly extended codes, linearized Reed--Solomon codes, maximum sum-rank distance codes, sum-rank metric, triply extended codes.

\textbf{MSC:} 15B33, 94B05, 94B65.
\end{abstract}

\section{Introduction} \label{sec intro}

Let $ \mathbb{F}_q $ denote the finite field of size $ q $, and denote by $ \mathbb{F}_q^n $ and $ \mathbb{F}_q^{m \times n} $ the spaces of row vectors of length $ n $ and matrices of size $ m \times n $, respectively, over $ \mathbb{F}_q $, for positive integers $ m $ and $ n $. We also denote $ \mathbb{N} = \{ 0,1,2,\ldots \} $ and $ [n] = \{ 1,2, \ldots, n \} $ for a positive integer $ n $. The Hamming metric in $ \mathbb{F}_q^n $ is given by $ {\rm d}_H(\mathbf{c}, \mathbf{d}) = | \{ i \in [n] \mid c_i \neq d_i \} | $, for $ \mathbf{c}, \mathbf{d} \in \mathbb{F}_q^n $.

Doubly extended Reed--Solomon codes \cite[Sec. 5.3]{pless} \cite[Ch. 11, Sec. 5]{macwilliamsbook} are the linear codes in $ \mathbb{F}_q^{n+2} $ given by the generator matrix
\begin{equation*}
\left( \begin{array}{cccc|cc}
1 & 1 & \ldots & 1 & 1 & 0 \\
 a_1 & a_2 & \ldots & a_n & 0 & 0 \\
 a_1^2 & a_2^2 & \ldots & a_n^2 & 0 & 0 \\
\vdots & \vdots & \ddots & \vdots & \vdots & \vdots \\
 a_1^{k-2} & a_2^{k-2} & \ldots & a_n^{k-2} & 0 & 0 \\
 a_1^{k-1} & a_2^{k-1} & \ldots & a_n^{k-1} & 0 & 1 \\
\end{array} \right) \in \mathbb{F}_q^{k \times (n+2)},
\end{equation*}
for $ k \in [n] $ and distinct $ a_1, a_2, \ldots , a_n \in \mathbb{F}_q^* $ (hence $ n \leq q-1 $ and $ n+2 \leq q+1 $, where equalities may be attained). 
%An algebraic way of seeing such codes is by evaluating polynomials $ f(x) \in \mathbb{F}_q[x] $ of degree less than $ k $ in the $ n+2 $ points $ a_1, a_2, \ldots, a_n, 0, \infty $ to obtain codewords of the form
%$$ (f(a_1), f(a_2), \ldots, f(a_n), f(0), f(\infty)) \in \mathbb{F}_q^{n+2}. $$
%Here, as usual, $ \infty $ denotes the point at infinity and we define $ f(\infty) = f_d $ if $ f(x) = f_0 + f_1 x + \cdots + f_d x^d $, where $ d\in \mathbb{N} $ and $ f_d \neq 0 $. 
One may show by using conventional polynomial results that the doubly extended Reed--Solomon code above is maximum distance separable (MDS). See \cite[Th. 5.3.4]{pless}. In other words, it attains the Singleton bound for the Hamming metric. Furthermore, these codes may have length $ q+1 $, which is conjectured to be maximum for most values of the code dimension $ k $. This is the well-known \textit{MDS conjecture} (see \cite[Sec. 7.4]{pless}), which has been proven for $ q $ prime \cite{ball}. 
%
%Narrow-sense Reed--Solomon codes are typically considered to be formed by evaluations of polynomials $ f(x) \in \mathbb{F}_q[x] $ in the $ n $ non-zero distinct points $ a_1, a_2, \ldots, a_n \in \mathbb{F}_q^* $. See \cite[Th. 5.2.3]{pless}. The terminology is thus justified since a narrow-sense Reed--Solomon code is obtained by puncturing the last two coordinates of a doubly extended Reed--Solomon code.

Recently, a generalization of this result was given in \cite{neri-oneweight} for the sum-rank metric, a metric that simultaneously generalizes the Hamming metric and the rank metric \cite{delsartebilinear, gabidulin, roth}. The generalization of Reed--Solomon codes to the sum-rank metric is called linearized Reed--Solomon codes, introduced in \cite{linearizedRS}, which are maximum sum-rank distance (MSRD) codes, i.e., they attain the Singleton bound for the sum-rank metric. More general families of linear MSRD codes exist \cite{generalMSRD, twisted}. The authors of \cite{neri-oneweight} introduced doubly extended linearized Reed--Solomon codes and showed, using geometric tools, that they are also MSRD. 

In this work, we show how one may extend codes attaining the Singleton bound for any metric given by a weight. The metric considered for the extended codes is obtained by adding Hamming-metric components, as was done for the sum-rank metric in \cite{neri-oneweight} (Section \ref{sec singleton sums}). In Section \ref{sec multiply extended}, we provide necessary and sufficient conditions for multiply extended codes to attain the Singleton bound based on the original codes. In Sections \ref{sec doubly extended MSRD} and \ref{sec triply extended}, we apply double and triple extensions, respectively, to the general MSRD codes obtained in \cite{generalMSRD}, which include linearized Reed--Solomon codes (and therefore classical Reed--Solomon codes and Gabidulin codes \cite{gabidulin, roth}). In Section \ref{sec negative sum-rank}, we study what happens when the extended portion is not considered with Hamming-metric components, but by considering the rank metric in the whole added block, and show that doubly extended codes are no longer MSRD in general. Finally, in Section \ref{sec one-weight}, we investigate when the obtained doubly and triply extended MSRD codes are one-weight codes.

\section{The Singleton bound for sums of metrics} \label{sec singleton sums}

In this manuscript, we consider metrics given by weights. Here, a weight function is a function $ {\rm wt} : \mathbb{F}_q^n \longrightarrow \mathbb{N} $ satisfying the following properties:
\begin{enumerate}
\item
$ {\rm wt}(\mathbf{c}) \geq 0 $ and it equals $ 0 $ if, and only if, $ \mathbf{c} = \mathbf{0} $, for all $ \mathbf{c} \in \mathbb{F}_q^n $.
\item
$ {\rm wt}(\lambda \mathbf{c}) = {\rm wt}(\mathbf{c}) $, for all $ \mathbf{c} \in \mathbb{F}_q^n $ and all $ \lambda \in \mathbb{F}_q^* $.
\item
$ {\rm wt}(\mathbf{c} + \mathbf{d}) \leq {\rm wt}(\mathbf{c}) + {\rm wt}(\mathbf{d}) $, for all $ \mathbf{c}, \mathbf{d} \in \mathbb{F}_q^n $.
\end{enumerate}
Its associated metric is the function $ {\rm d} : (\mathbb{F}_q^n)^2 \longrightarrow \mathbb{N} $ given by $ {\rm d}(\mathbf{c}, \mathbf{d}) = {\rm wt}(\mathbf{c} - \mathbf{d}) $, for $ \mathbf{c}, \mathbf{d} \in \mathbb{F}_q^n $. It is straightforward to prove that a metric given by a weight as above is indeed a metric (see \cite[Th. 1.4.1]{pless}). %Furthermore, it is easy to see that a metric $ {\rm d} $ is given by a weight $ {\rm wt} $ if, and only if, $ {\rm d} $ is invariant by translations and homotheties, i.e., $ {\rm d}(\mathbf{c} + \mathbf{t}, \mathbf{d} + \mathbf{t}) = {\rm d}(\mathbf{c},\mathbf{d}) $ and $ {\rm d}(\lambda \mathbf{c}, \lambda \mathbf{d}) = {\rm d}(\mathbf{c}, \mathbf{d}) $, for $ \mathbf{c}, \mathbf{d}, \mathbf{t} \in \mathbb{F}_q^n $ and $ \lambda \in \mathbb{F}_q^* $. Moreover, in such a case, one may recover the weight from the metric by the formula $ {\rm wt}(\mathbf{c}) = {\rm d}(\mathbf{c}, \mathbf{0}) $, for $ \mathbf{c} \in \mathbb{F}_q^n $.

As usual, we define the minimum distance of a code $ \mathcal{C} \subseteq \mathbb{F}_q^n $ (a code is just a set) with respect to $ {\rm d} $ as
$$ {\rm d}(\mathcal{C}) = \min \{ {\rm d}(\mathbf{c}, \mathbf{d}) \mid \mathbf{c}, \mathbf{d} \in \mathcal{C}, \mathbf{c} \neq \mathbf{d} \}. $$
It is well-known that, if $ \mathcal{C} $ is linear (i.e., an $ \mathbb{F}_q $-linear subspace of $ \mathbb{F}_q^n $), then $ {\rm d}(\mathcal{C}) = \min \{ {\rm wt}(\mathbf{c}) \mid \mathbf{c} \in \mathcal{C} \setminus \{ \mathbf{0} \} \} $, where $ {\rm wt} $ is the weight giving the metric $ {\rm d} $.

We will say that a metric $ {\rm d} $ satisfies the Singleton bound if
\begin{equation}
{\rm d}(\mathcal{C}) \leq n - k + 1,
\label{eq singleton bound}
\end{equation}
where $ k = \log_q|\mathcal{C}| $, for any code $ \mathcal{C} \subseteq \mathbb{F}_q^n $. Any metric given by a weight that is upper bounded by the Hamming weight satisfies the Singleton bound. Many examples exist, including the Hamming metric itself, the rank metric \cite{delsartebilinear, gabidulin}, the sum-rank metric \cite{linearizedRS}, the cover metric \cite{roth} and the multi-cover metric \cite{multicover}, among others. 

Some of these metrics, e.g., the sum-rank metric, the multi-cover metric or the Hamming metric itself, are given by sums of other metrics. In general, given weights $ {\rm wt}_i $ in $ \mathbb{F}_q^{n_i} $, for $ i \in [\ell] $, we may define their sum as
$$ {\rm wt}_{{\rm sum}}(\mathbf{c}) = {\rm wt}_1(\mathbf{c}_1) + {\rm wt}_2(\mathbf{c}_2) + \cdots + {\rm wt}_\ell(\mathbf{c}_\ell), $$
for $ \mathbf{c} = (\mathbf{c}_1, \mathbf{c}_2, \ldots, \mathbf{c}_\ell) \in \mathbb{F}_q^n $, where $ n = n_1 + n_2 + \cdots + n_\ell $ and $ \mathbf{c}_i \in \mathbb{F}_q^{n_i} $, for $ i \in [\ell] $. Clearly, $ {\rm wt}_{{\rm sum}} $ is a weight. We denote similarly the corresponding associated metric. It is easy to see that $ {\rm d}_{{\rm sum}} $ satisfies the Singleton bound if so do the metrics $ {\rm d}_i $, for $ i \in [\ell] $.  

In the remainder of the manuscript, we will only consider metrics $ {\rm d}: (\mathbb{F}_q^n)^2 \longrightarrow \mathbb{N} $ given by weights and satisfying the Singleton bound (\ref{eq singleton bound}).

\section{Multiply extended codes} \label{sec multiply extended}

%\umberto{The ordering of the columns inside the $ t \times t $ identity matrix is irrelevant. Important remark.}

In this section, we give a definition of multiply extended codes for general metrics and show that they attain the Singleton bound if so do certain codes related to the original code and the metric is extended by adding a Hamming-metric component. In Sections \ref{sec doubly extended MSRD} and \ref{sec triply extended}, we will particularize these results to construct doubly and triply extended MSRD codes.

\begin{theorem} \label{th multiply extended}
Let $ \mathbf{g}_1, \mathbf{g}_2, \ldots , \mathbf{g}_k \in \mathbb{F}_q^n $ be linearly independent, and let $ t \in [k] $. Consider the $ k $-dimensional linear code $ \mathcal{C}_e \subseteq \mathbb{F}_q^{n+t} $ with generator matrix
$$ G_e = \left( \begin{array}{c|cccc}
\mathbf{g}_1 & 1 & 0 & \ldots & 0 \\
\mathbf{g}_2 & 0 & 1 & \ldots & 0 \\
\vdots & \vdots & \vdots & \ddots & \vdots \\
\mathbf{g}_t & 0 & 0 & \ldots & 1 \\
\hline
\mathbf{g}_{t+1} & 0 & 0 & \ldots & 0 \\
\vdots & \vdots & \vdots & \ddots & \vdots \\
\mathbf{g}_k & 0 & 0 & \ldots & 0
\end{array} \right) \in \mathbb{F}_q^{k \times (n+t)}. $$
Define also the linear codes $ \mathcal{C}_I = \langle \{ \mathbf{g}_i \mid i \in I \} \rangle + \langle \mathbf{g}_{t+1}, \ldots, \mathbf{g}_k \rangle $, and set $ d_I = {\rm d}(\mathcal{C}_I) $, for $ I \subseteq [t] $. Here $ \langle \cdot \rangle $ denotes linear span. Then it holds that $ {\rm d}_e (\mathcal{C}_e) = \min \{ d_I + |I| \mid I \subseteq [t] \} $, where the metric $ {\rm d}_e : (\mathbb{F}_q^{n+t})^2 \longrightarrow \mathbb{N} $ is given by
$$ {\rm d}_e( (\mathbf{c}_1, \mathbf{c}_2), (\mathbf{d}_1, \mathbf{d}_2)) = {\rm d}(\mathbf{c}_1,\mathbf{d}_1) + {\rm d}_H(\mathbf{c}_2,\mathbf{d}_2), $$
for $ \mathbf{c}_1 , \mathbf{d}_1 \in \mathbb{F}_q^n $ and $ \mathbf{c}_2,\mathbf{d}_2 \in \mathbb{F}_q^t $.
\end{theorem}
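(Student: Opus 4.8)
The plan is to compute the minimum distance of the linear code $\mathcal{C}_e$ directly, using the fact that for a linear code the minimum distance equals the minimum weight of a nonzero codeword. A generic codeword of $\mathcal{C}_e$ has the form $(\mathbf{c}, \mathbf{x})$ where $\mathbf{c} = \sum_{i=1}^k \lambda_i \mathbf{g}_i \in \mathbb{F}_q^n$ and $\mathbf{x} = (\lambda_1, \ldots, \lambda_t) \in \mathbb{F}_q^t$, for scalars $\lambda_1, \ldots, \lambda_k \in \mathbb{F}_q$. Its $ {\rm d}_e $-weight is ${\rm wt}(\mathbf{c}) + {\rm wt}_H(\mathbf{x})$. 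Let $I = \{ i \in [t] \mid \lambda_i \neq 0 \}$ be the support of $\mathbf{x}$, so that ${\rm wt}_H(\mathbf{x}) = |I|$. I will argue that the codeword is nonzero iff either $I \neq \emptyset$ or $\mathbf{c} \neq \mathbf{0}$, and organize the minimization according to the value of $I \subseteq [t]$.

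First I would prove the inequality ${\rm d}_e(\mathcal{C}_e) \leq \min\{ d_I + |I| \mid I \subseteq [t]\}$. Fix $I \subseteq [t]$. By definition of $\mathcal{C}_I$ and $d_I = {\rm d}(\mathcal{C}_I)$, there is a nonzero $\mathbf{c} \in \mathcal{C}_I$ with ${\rm wt}(\mathbf{c}) = d_I$; write $\mathbf{c} = \sum_{i \in I} \lambda_i \mathbf{g}_i + \sum_{i=t+1}^k \lambda_i \mathbf{g}_i$. If $I \neq \emptyset$, I would note that we may assume all $\lambda_i \neq 0$ for $i \in I$: indeed, if some $\lambda_i = 0$ we could pass to a smaller $I' \subsetneq I$ with $d_{I'} \le d_I$ and $|I'| < |I|$, which only helps the bound — so it suffices to consider the case where the support of $\mathbf{x} = (\lambda_1, \ldots, \lambda_t)$ is exactly $I$ (here $\lambda_j = 0$ for $j \in [t] \setminus I$). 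Then $(\mathbf{c}, \mathbf{x}) \in \mathcal{C}_e$ is nonzero with ${\rm d}_e$-weight ${\rm wt}(\mathbf{c}) + |I| = d_I + |I|$. For $I = \emptyset$ we have $\mathcal{C}_\emptyset = \langle \mathbf{g}_{t+1}, \ldots, \mathbf{g}_k \rangle$, which is nonzero provided $t < k$; a minimum-weight $\mathbf{c}$ there gives the codeword $(\mathbf{c}, \mathbf{0})$ of weight $d_\emptyset$. (When $t = k$ the term $I = \emptyset$ corresponds to $\mathcal{C}_\emptyset = \{\mathbf 0\}$; I will need to handle this degenerate case by a convention such as $d_\emptyset = +\infty$, or simply remark that the minimum is effectively over nonempty $I$, equivalently over $I$ with $\mathcal C_I \ne \{\mathbf 0\}$ — this is the one bookkeeping point to get right.) Taking the minimum over all $I$ yields the upper bound.

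Next I would prove the reverse inequality ${\rm d}_e(\mathcal{C}_e) \geq \min\{ d_I + |I| \mid I \subseteq [t]\}$. Take any nonzero codeword $(\mathbf{c}, \mathbf{x}) \in \mathcal{C}_e$, written as above with $\mathbf{x} = (\lambda_1,\ldots,\lambda_t)$ and $I = \operatorname{supp}(\mathbf{x})$. By construction $\mathbf{c} = \sum_{i\in I}\lambda_i\mathbf{g}_i + \sum_{i=t+1}^k\lambda_i\mathbf{g}_i \in \mathcal{C}_I$. If $\mathbf{c} \neq \mathbf{0}$, then ${\rm wt}(\mathbf{c}) \geq {\rm d}(\mathcal{C}_I) = d_I$, so ${\rm d}_e(\mathbf{c},\mathbf{x}) = {\rm wt}(\mathbf{c}) + |I| \geq d_I + |I| \ge \min_J (d_J + |J|)$. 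If $\mathbf{c} = \mathbf{0}$, then since $(\mathbf{c},\mathbf{x})\neq\mathbf 0$ we have $\mathbf{x} \neq \mathbf 0$, so $I \neq \emptyset$ and $|I| \ge 1$; moreover $\mathbf 0 = \sum_{i\in I}\lambda_i\mathbf g_i + \sum_{i>t}\lambda_i\mathbf g_i$ with not all coefficients zero contradicts the linear independence of $\mathbf g_1,\ldots,\mathbf g_k$ — hence this case cannot occur, or more carefully it can only occur with the $\mathbf x$-part still nonzero which is the contradiction. So in fact $\mathbf{c} = \mathbf 0$ forces $\mathbf x = \mathbf 0$, ruling out the subcase entirely; every nonzero codeword falls under the first subcase and we are done.

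The main obstacle I anticipate is not the core argument — which is a clean support-tracking computation — but the edge cases: ensuring $\mathcal{C}_I$ is well-defined and the "reduce to $\operatorname{supp}(\mathbf x) = I$ exactly" step is stated cleanly, and, above all, correctly handling $I = \emptyset$ when $t = k$ so that the formula $\min\{d_I + |I| \mid I \subseteq [t]\}$ makes sense (via a convention $d_\emptyset = \infty$ when $\mathcal C_\emptyset = \{\mathbf 0\}$). Everything else follows from linear independence of the $\mathbf{g}_i$ (which guarantees the map $(\lambda_1,\ldots,\lambda_k) \mapsto (\mathbf c, \mathbf x)$ is injective, so $\mathcal C_e$ is genuinely $k$-dimensional and a nonzero coefficient vector gives a nonzero codeword) and the definition of ${\rm d}_e$ as a sum of ${\rm d}$ and the Hamming metric.
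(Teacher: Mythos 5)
Your proposal is correct and follows essentially the same route as the paper's proof: parametrize codewords of $\mathcal{C}_e$ by the support $I\subseteq[t]$ of the appended Hamming block, get the lower bound from $\mathrm{wt}(\mathbf{c})\geq d_I$ via linear independence of the $\mathbf{g}_i$, and get the upper bound by taking a minimum-weight word of $\mathcal{C}_I$ and shrinking $I$ to its actual support (where, as in the paper, $d_{I'}=d_I$). Your explicit handling of the $\mathbf{c}=\mathbf{0}$ subcase and of the degenerate term $I=\varnothing$ when $t=k$ is slightly more careful than the paper's write-up, but the argument is the same.
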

\begin{proof}
Let $ \mathbf{e}_1, \mathbf{e}_2, \ldots, \mathbf{e}_t \in \mathbb{F}_q^t $ denote the canonical basis. A codeword in $ \mathcal{C}_e $ is of the form
$$ \mathbf{c} = \left( \sum_{i \in I} \lambda_i \mathbf{g}_i + \sum_{j=t+1}^k \lambda_j \mathbf{g}_j, \sum_{i \in I} \lambda_i \mathbf{e}_i \right), $$
where $ I \subseteq [t] $, $ \lambda_i \in \mathbb{F}_q^* $, for $ i \in I $, and $ \lambda_j \in \mathbb{F}_q $, for $ j = t+1, \ldots, k $. Note that possibly $ I = \varnothing $. Since $ \lambda_i \neq 0 $ for $ i \in I $, we deduce that
\begin{equation*}
\begin{split}
{\rm wt}_e (\mathbf{c}) & = {\rm wt} \left( \sum_{i \in I} \lambda_i \mathbf{g}_i + \sum_{j=t+1}^k \lambda_j \mathbf{g}_j \right) + {\rm wt}_H \left( \sum_{i \in I} \lambda_i \mathbf{e}_i \right) \\
 & = {\rm wt} \left( \sum_{i \in I} \lambda_i \mathbf{g}_i + \sum_{j=t+1}^k \lambda_j \mathbf{g}_j \right) + |I| \\
 & \geq {\rm d}_I + |I|. 
\end{split}
\end{equation*}
Therefore, we have that $ {\rm d}_e(\mathcal{C}_e) \geq \min \{ d_I + |I| \mid I \subseteq [t] \} $. 

Now, consider a subset $ I \subseteq [t] $ and take $ \mathbf{d} = \sum_{i \in I} \lambda_i \mathbf{g}_i + \sum_{j=t+1}^k \lambda_j \mathbf{g}_j \in \mathcal{C}_I $ such that $ {\rm wt}(\mathbf{d}) = d_I $, where $ \lambda_i \in \mathbb{F}_q $ for $ i \in I \cup \{ t+1, \ldots, k \} $. If $ J \subseteq I $ is such that $ \mathbf{d} = \sum_{i \in J} \lambda_i \mathbf{g}_i + \sum_{j=t+1}^k \lambda_j \mathbf{g}_j $ (i.e., $ \lambda_i = 0 $ if $ i \in I \setminus J $), then $ \mathbf{d} \in \mathcal{C}_J $ and thus
$$ d_J \leq {\rm wt}(\mathbf{d}) = d_I \leq d_J. $$
Hence we also have $ {\rm wt}(\mathbf{d}) = d_J $. Thus there exist $ I \subseteq [t] $ and a codeword $ \mathbf{d} = \sum_{i \in I} \lambda_i \mathbf{g}_i + \sum_{j=t+1}^k \lambda_j \mathbf{g}_j $ with $ {\rm wt}(\mathbf{d}) = d_I $, $ \lambda_{t+1}, \ldots, \lambda_k \in \mathbb{F}_q $ and $ \lambda_i \in \mathbb{F}_q^* $, for all $ i \in I $. Therefore,
$$ {\rm wt}_e \left( \sum_{i \in I} \lambda_i \mathbf{g}_i + \sum_{j=t+1}^k \lambda_j \mathbf{g}_j , \sum_{i \in I} \lambda_i \mathbf{e}_i \right) = d_I + |I|. $$
Considering all of the subsets $ I \subseteq [t] $ such that $ d_I = {\rm wt}(\mathbf{d}) $ for some $ \mathbf{d} = \sum_{i \in J} \lambda_i \mathbf{g}_i + \sum_{j=t+1}^k \lambda_j \mathbf{g}_j $, with $ \lambda_i \in \mathbb{F}_q^* $ for $ i \in I $ and $ \lambda_j \in \mathbb{F}_q $ for $ j \in \{ t+1, \ldots, k \} $, we conclude that $ {\rm d}_e(\mathcal{C}_e) = \min \{ d_I + |I| \mid I \subseteq [t] \} $. 
\end{proof}

We now deduce the following result on multiply extended codes that attain the Singleton bound.

\begin{corollary} \label{cor multiply singleton}
With notation as in Theorem \ref{th multiply extended}, the code $ \mathcal{C}_e $ attains the Singleton bound for $ {\rm d}_e $ if, and only if, so do the codes $ \mathcal{C}_I $ for $ {\rm d} $, for all $ I \subseteq [t] $.
\end{corollary}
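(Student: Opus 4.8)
The plan is to translate both instances of the Singleton bound into statements about the single quantity $ d_I + |I| $ and then feed the formula of Theorem \ref{th multiply extended} into the picture. First I would record the relevant parameters. Since $ \mathbf{g}_1, \ldots, \mathbf{g}_k $ are linearly independent and $ I \subseteq [t] $, the sets $ I $ and $ \{ t+1, \ldots, k \} $ are disjoint, so $ \mathcal{C}_I $ has length $ n $ and dimension $ |I| + (k-t) $; likewise $ \mathcal{C}_e $ has length $ n+t $ and dimension $ k $. Consequently, $ \mathcal{C}_e $ attains the Singleton bound for $ {\rm d}_e $ exactly when $ {\rm d}_e(\mathcal{C}_e) = n+t-k+1 $, whereas $ \mathcal{C}_I $ attains the Singleton bound for $ {\rm d} $ exactly when $ d_I = n - (|I| + k - t) + 1 $, i.e., when $ d_I + |I| = n+t-k+1 $.

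Next I would invoke the standing assumption from the end of Section \ref{sec singleton sums} that $ {\rm d} $ satisfies the Singleton bound: applied to each $ \mathcal{C}_I $ it yields $ d_I + |I| \leq n+t-k+1 $ for every $ I \subseteq [t] $. Combining this with the identity $ {\rm d}_e(\mathcal{C}_e) = \min \{ d_I + |I| \mid I \subseteq [t] \} $ from Theorem \ref{th multiply extended}, I get $ {\rm d}_e(\mathcal{C}_e) \leq n+t-k+1 $, and equality holds if and only if \emph{every} term $ d_I + |I| $ equals $ n+t-k+1 $ --- for if a single term were strictly smaller, so would be the minimum. By the previous paragraph, the statement ``$ d_I + |I| = n+t-k+1 $ for all $ I \subseteq [t] $'' is precisely ``$ \mathcal{C}_I $ attains the Singleton bound for $ {\rm d} $ for all $ I \subseteq [t] $'', which completes the argument.

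There is no genuinely hard step here; the only point deserving care is the elementary observation that all the quantities $ d_I + |I| $ share the common upper bound $ n+t-k+1 $, so that the minimum meeting this bound forces each individual term to meet it. I would also note that the edge case $ I = \varnothing $ (where $ \mathcal{C}_\varnothing = \langle \mathbf{g}_{t+1}, \ldots, \mathbf{g}_k \rangle $ and the condition reads $ d_\varnothing = n - (k-t) + 1 $) is handled uniformly by the displayed argument and needs no separate treatment, and that consistency is confirmed by the fact that $ {\rm d}_e $, being a sum of $ {\rm d} $ and a Hamming component, itself satisfies the Singleton bound.
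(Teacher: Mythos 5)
Your proof is correct and follows essentially the same route as the paper: compute $\dim(\mathcal{C}_I) = |I| + k - t$, rewrite each Singleton condition as $d_I + |I| = n+t-k+1$, and observe via Theorem \ref{th multiply extended} that the minimum attains this common upper bound if and only if every term does. You are in fact slightly more explicit than the paper in justifying the upper bound $d_I + |I| \leq n+t-k+1$ from the standing assumption that ${\rm d}$ satisfies the Singleton bound, which is a welcome clarification of the paper's terse ``and the result follows.''
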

\begin{proof}
Note that $ \dim(\mathcal{C}_e) = k $ and $ \dim(\mathcal{C}_I) = k + |I| - t $, for $ I \subseteq [t] $. Hence $ \mathcal{C}_I $ attains the Singleton bound for $ {\rm d} $ if, and only if,
$$ d_I = n - (k+|I|-t) + 1 = (n+t) - k - |I| + 1 . $$ 
We also have that $ \mathcal{C}_e $ attains the Singleton bound if, and only if,
\begin{equation*}
\begin{split}
{\rm d}_e (\mathcal{C}_e) & = \min \{ d_I + |I| \mid I \subseteq [t] \} \\
& = (n+t) - k + 1 \\
& = \min \{ (n+t) - k - |I| + 1 + |I| \mid I \subseteq [t] \},
\end{split}
\end{equation*}
and the result follows.
\end{proof}

\begin{remark}
Setting $ t = k $ and $ {\rm d} = {\rm d}_H $ (i.e., $ {\rm d}_e = {\rm d}_H $), then Corollary \ref{cor multiply singleton} recovers the well-known characterization of systematic generator matrices of MDS codes from \cite[Ch. 11, Th. 8]{macwilliamsbook}. In other words, when $ t = k $ and $ {\rm d} = {\rm d}_H $, Corollary \ref{cor multiply singleton} states that $ \mathcal{C}_e $ is MDS if, and only if, every square submatrix of $ G $ is invertible, where $ G $ is the matrix whose rows are $ \mathbf{g}_1, \mathbf{g}_2, \ldots, \mathbf{g}_k \in \mathbb{F}_q^n $. Corollary \ref{cor multiply singleton} extends this result to any $ t \in [k] $ and any metric $ {\rm d} $ given by a weight satisfying the Singleton bound.
\end{remark}

Finally, we note that we have a lattice of linear codes $ \mathcal{C}_I \subseteq \mathbb{F}_q^n $, for $ I \subseteq [t] $, with respect to inclusions or, equivalently, unions and intersections, i.e., we have the following inclusion graph:
$$ \begin{array}{ccccc}
 & & \mathcal{C}_{I \cup J} & & \\
 & \nearrow &  & \nwarrow & \\
\mathcal{C}_I & & & & \mathcal{C}_J \\
 & \nwarrow &  & \nearrow & \\
 & & \mathcal{C}_{I \cap J} . & & 
\end{array} $$

By taking systematic generator matrices, we deduce that the existence of a linear code in $ \mathbb{F}_q^{n + t} $ attaining the Singleton bound for $ {\rm d}_e $ is equivalent to the existence of a lattice of linear codes $ \mathcal{C}_I \subseteq \mathbb{F}_q^n $, for $ I \subseteq [t] $, as above, attaining the Singleton bound for $ {\rm d} $. This property also holds for the dual codes, as stated in the following proposition. %\umberto{Probably this should be a theorem??}
%
%As in the case of doubly extended codes, the inclusion graph above also holds for the dual codes. 
Here, we define the dual of a linear code $ \mathcal{C} \subseteq \mathbb{F}_q^n $ as usual: $ \mathcal{C}^\perp = \{ \mathbf{d} \in \mathbb{F}_q^n \mid \mathbf{c} \cdot \mathbf{d}^\intercal = 0, \forall \mathbf{c} \in \mathcal{C} \} $.

\begin{proposition} \label{prop dual lattice}
Let $ \mathcal{C}_I \subseteq \mathbb{F}_q^n $, for $ I \subseteq [t] $, be a family of linear codes such that the map $ I \mapsto \mathcal{C}_I $ is a lattice isomorphism. Define now the linear codes $ \mathcal{D}_I = (\mathcal{C}_{I^c})^\perp \subseteq \mathbb{F}_q^n $, for $ I \subseteq [t] $, where $ I^c = [t] \setminus I $ denotes the complement of $ I $ in $ [t] $. Then the map $ I \mapsto \mathcal{D}_I $ is also a lattice isomorphism. 
\end{proposition}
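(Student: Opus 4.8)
The plan is to realize the assignment $I \mapsto \mathcal{D}_I$ as a composition of three bijections each of which either preserves or reverses inclusions, so that the composite reverses twice and is therefore a genuine lattice isomorphism. Throughout, I read the hypothesis ``$I \mapsto \mathcal{C}_I$ is a lattice isomorphism'' in the natural way suggested by Theorem \ref{th multiply extended}: the family $\{\mathcal{C}_I \mid I \subseteq [t]\}$ is closed under sums and intersections, the map is a bijection onto it, and $\mathcal{C}_{I\cup J} = \mathcal{C}_I + \mathcal{C}_J$, $\mathcal{C}_{I\cap J} = \mathcal{C}_I \cap \mathcal{C}_J$ for all $I,J\subseteq[t]$ (equivalently, $I\subseteq J \iff \mathcal{C}_I \subseteq \mathcal{C}_J$). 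The same bijectivity/order data is what must be established for $\mathcal{D}_I$.

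The first step is to record that complementation $c\colon 2^{[t]}\to 2^{[t]}$, $I\mapsto I^c$, is an order-reversing bijection obeying the De Morgan laws $(I\cup J)^c = I^c\cap J^c$ and $(I\cap J)^c = I^c\cup J^c$; composing with the hypothesis gives that $I\mapsto \mathcal{C}_{I^c}$ is a lattice \emph{anti}-isomorphism, i.e. $\mathcal{C}_{(I\cup J)^c} = \mathcal{C}_{I^c}\cap \mathcal{C}_{J^c}$ and $\mathcal{C}_{(I\cap J)^c} = \mathcal{C}_{I^c} + \mathcal{C}_{J^c}$. The second step is the standard duality dictionary on the subspace lattice of $\mathbb{F}_q^n$: $\mathcal{C}\mapsto\mathcal{C}^\perp$ is an inclusion-reversing involution with $(\mathcal{C}+\mathcal{D})^\perp = \mathcal{C}^\perp\cap\mathcal{D}^\perp$ and $(\mathcal{C}\cap\mathcal{D})^\perp = \mathcal{C}^\perp+\mathcal{D}^\perp$. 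The first of these is immediate from the definition of $\perp$; the second follows by applying the first to $\mathcal{C}^\perp,\mathcal{D}^\perp$ and then taking $\perp$ again, using $(\mathcal{C}^\perp)^\perp=\mathcal{C}$, which is where finite-dimensionality of $\mathbb{F}_q^n$ enters.

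Putting the two steps together, $\mathcal{D}_I = (\mathcal{C}_{I^c})^\perp$ is the composite of the anti-isomorphism $I\mapsto \mathcal{C}_{I^c}$ with the anti-isomorphism $\mathcal{C}\mapsto\mathcal{C}^\perp$, hence a lattice isomorphism onto its image; explicitly, chaining the identities yields
$$ \mathcal{D}_{I\cup J} = (\mathcal{C}_{I^c}\cap\mathcal{C}_{J^c})^\perp = (\mathcal{C}_{I^c})^\perp + (\mathcal{C}_{J^c})^\perp = \mathcal{D}_I + \mathcal{D}_J $$
and
$$ \mathcal{D}_{I\cap J} = (\mathcal{C}_{I^c}+\mathcal{C}_{J^c})^\perp = (\mathcal{C}_{I^c})^\perp \cap (\mathcal{C}_{J^c})^\perp = \mathcal{D}_I \cap \mathcal{D}_J, $$
while bijectivity and the two-way implication $I\subseteq J \iff \mathcal{D}_I\subseteq\mathcal{D}_J$ are inherited because each of $c$, $I\mapsto\mathcal{C}_{I^c}$, and $\perp$ is a bijection with that reversing property. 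I do not expect a real obstacle here: the whole argument is bookkeeping about anti-isomorphisms, and the only point that is not completely formal is the identity $(\mathcal{C}\cap\mathcal{D})^\perp = \mathcal{C}^\perp+\mathcal{D}^\perp$ from the second step, which must be invoked (or proved in one line) rather than taken as a definitional triviality.
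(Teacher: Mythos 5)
Your proof is correct and follows essentially the same route as the paper: both establish $\mathcal{D}_{I\cup J}=\mathcal{D}_I+\mathcal{D}_J$ and $\mathcal{D}_{I\cap J}=\mathcal{D}_I\cap\mathcal{D}_J$ by chaining De Morgan's laws with the duality identities $(\mathcal{C}+\mathcal{D})^\perp=\mathcal{C}^\perp\cap\mathcal{D}^\perp$ and $(\mathcal{C}\cap\mathcal{D})^\perp=\mathcal{C}^\perp+\mathcal{D}^\perp$. Your explicit remark that the latter identity rests on biduality in the finite-dimensional setting is a point the paper leaves implicit, but it is the same argument.
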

\begin{proof}
Simply notice that, for $ I,J \subseteq [t] $, we have
$$ \mathcal{D}_I + \mathcal{D}_J = (\mathcal{C}_{I^c})^\perp + (\mathcal{C}_{J^c})^\perp = (\mathcal{C}_{I^c} \cap \mathcal{C}_{J^c})^\perp = (\mathcal{C}_{I^c \cap J^c})^\perp = (\mathcal{C}_{(I \cup J)^c})^\perp = \mathcal{D}_{I \cup J}, $$
$$ \mathcal{D}_I \cap \mathcal{D}_J = (\mathcal{C}_{I^c})^\perp \cap (\mathcal{C}_{J^c})^\perp = (\mathcal{C}_{I^c} + \mathcal{C}_{J^c})^\perp = (\mathcal{C}_{I^c \cup J^c})^\perp = (\mathcal{C}_{(I \cap J)^c})^\perp = \mathcal{D}_{I \cap J}. $$
\end{proof}

Assume that $ {\rm d} $ is a metric such that a linear code attains the Singleton bound if, and only if, so does its dual code. In such a case, Proposition \ref{prop dual lattice} states that we do not need to check the conditions in Corollary \ref{cor multiply singleton} for both the primary and dual codes, but only for one of them. This is the case of the sum-rank metric \cite[Th. 5]{gsrws}, and thus of the Hamming and rank metrics in particular.

\section{Doubly extended MSRD codes} \label{sec doubly extended MSRD}

In this section, we generalize the construction of doubly extended linearized Reed--Solomon codes from \cite{neri-oneweight} to the general family of MSRD codes from \cite{generalMSRD}. Using Corollary \ref{cor multiply singleton}, we will show that such doubly extended MSRD codes are again MSRD. 

Recall that the sum-rank metric \cite{multishot} in $ \mathbb{F}_{q^m}^n $ over $ \mathbb{F}_q $ for the length partition $ (g,r) $ is defined as a sum of rank metrics, i.e., sum-rank weights are given by
$$ {\rm wt}_{SR}(\mathbf{c}) = \sum_{i=1}^g {\rm wt}_R \left( \mathbf{c}^{(i)} \right), $$
for $ \mathbf{c} = \left( \mathbf{c}^{(1)}, \mathbf{c}^{(2)}, \ldots, \mathbf{c}^{(g)} \right) \in \mathbb{F}_{q^m}^n $, where $ \mathbf{c}^{(i)} \in \mathbb{F}_{q^m}^r $, for $ i \in [g] $, and $ n = gr $. Recall that rank weights in $ \mathbb{F}_{q^m}^r $ are given by $ {\rm wt}_R(\mathbf{d}) = \dim_{\mathbb{F}_q}( \langle d_1, d_2, \ldots, d_r \rangle_{\mathbb{F}_q} ) $, for $ \mathbf{d} = (d_1, d_2, \ldots, d_r) \in \mathbb{F}_{q^m}^r $.

We now give the definition of extended Moore matrices from \cite[Def. 3.4]{generalMSRD}.

\begin{definition}[\textbf{Extended Moore matrices \cite{generalMSRD}}] \label{def moore matrix extended}
Fix positive integers $ \ell $ and $ \eta $. Let $ \mathbf{a} = ( a_1, a_2, \ldots, a_\ell ) \in (\mathbb{F}_{q^m}^*)^\ell $ be such that $ N_{\mathbb{F}_{q^m}/\mathbb{F}_q}(a_i) \neq N_{\mathbb{F}_{q^m}/\mathbb{F}_q}(a_j) $ if $ i \neq j $, where $ N_{\mathbb{F}_{q^m}/\mathbb{F}_q}(a) = a \cdot a^q \cdots a^{q^{m-1}} $, for $ a \in \mathbb{F}_{q^m} $. For any $ \boldsymbol\beta = ( \beta_1, \beta_2, \ldots, \beta_\eta ) \in \mathbb{F}_{q^m}^{\eta} $ and $ k \in [\ell \eta] $, we define the \textit{extended Moore matrix} $ M_k(\mathbf{a}, \boldsymbol\beta) \in \mathbb{F}_{q^m}^{k \times (\ell \eta)} $ by $ M_k(\mathbf{a}, \boldsymbol\beta) = $
\begin{equation*}
\left( \begin{array}{lll|c|lll}
\beta_1 & \ldots & \beta_\eta & \ldots & \beta_1 & \ldots & \beta_\eta \\
 \beta_1^q a_1 & \ldots & \beta_\eta^q a_1 & \ldots & \beta_1^q a_\ell & \ldots & \beta_\eta^q a_\ell \\
\beta_1^{q^2} a_1^{\frac{q^2-1}{q-1}} & \ldots & \beta_\eta^{q^2} a_1^{\frac{q^2-1}{q-1}} & \ldots &  \beta_1^{q^2} a_\ell^{\frac{q^2-1}{q-1}} & \ldots &  \beta_\eta^{q^2} a_\ell^{\frac{q^2-1}{q-1}} \\
\vdots & \ddots & \vdots & \ddots & \vdots & \ddots & \vdots \\
 \beta_1^{q^{k-1}} a_1^{\frac{q^{k-1}-1}{q-1}} & \ldots &  \beta_\eta^{q^{k-1}} a_1^{\frac{q^{k-1}-1}{q-1}} & \ldots &  \beta_1^{q^{k-1}} a_\ell^{\frac{q^{k-1}-1}{q-1}} & \ldots &  \beta_\eta^{q^{k-1}} a_\ell^{\frac{q^{k-1}-1}{q-1}} \\
\end{array} \right),
\end{equation*}
and we denote by $ \mathcal{C}_k(\mathbf{a}, \boldsymbol\beta) \subseteq \mathbb{F}_{q^m}^{\ell \eta} $ the $ k $-dimensional linear code generated by $ M_k(\mathbf{a}, \boldsymbol\beta) $ (i.e., the rows of $ M_k(\mathbf{a}, \boldsymbol\beta) $ generate the vector space  $ \mathcal{C}_k(\mathbf{a}, \boldsymbol\beta) $).
\end{definition}

The following result \cite[Th. 3.12]{generalMSRD} characterizes when a code with an extended Moore matrix as generator or parity-check matrix is MSRD. 

\begin{theorem}[\textbf{\cite{generalMSRD}}] \label{th extended moore matrix is msrd sufficient}
Let $ \mathbf{a} = ( a_1, a_2, \ldots, a_\ell) \in (\mathbb{F}_{q^m}^*)^\ell $ be as in Definition \ref{def moore matrix extended}. Let $ \boldsymbol\beta = (\beta_1, \beta_2, \ldots, \beta_{\mu r}) \in \mathbb{F}_{q^m}^{\mu r} $, for positive integers $ \mu $ and $ r $, and set $ g = \ell \mu $. Define the $ \mathbb{F}_q $-linear subspace
\begin{equation}
 \mathcal{H}_i = \left\langle \beta_{(i-1)r+1}, \beta_{(i-1)r+2}, \ldots, \beta_{ir}  \right\rangle_{\mathbb{F}_q} \subseteq \mathbb{F}_{q^m},
\label{eq subspace from h-lin indep}
\end{equation}
for $ i \in [\mu] $. Given $ k \in [gr] $, the code $ \mathcal{C}_k(\mathbf{a}, \boldsymbol\beta) $ from Definition \ref{def moore matrix extended} is MSRD  over $ \mathbb{F}_q $ for the length partition $ (g,r) $ if, and only if, the following two conditions hold for all $ i \in [\mu] $:
\begin{enumerate}
\item
$ \dim_{\mathbb{F}_q}(\mathcal{H}_i) = r $, and
\item
$ \mathcal{H}_i \cap \left( \sum_{j \in \Gamma} \mathcal{H}_j \right) = \{ 0 \} $, for any set $ \Gamma \subseteq [\mu] $, such that $ i \notin \Gamma $ and $ |\Gamma| \leq \min \{ k,\mu \} -1 $.
\end{enumerate}
\end{theorem}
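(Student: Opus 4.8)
The plan is to pass from the extended Moore matrix to the language of $\mathbb{F}_q$-linearized (equivalently, skew) polynomial evaluations. For $a\in\mathbb{F}_{q^m}^*$, let $D_a\colon\mathbb{F}_{q^m}\to\mathbb{F}_{q^m}$, $D_a(x)=ax^q$, so that $D_a^i(x)=a^{\frac{q^i-1}{q-1}}x^{q^i}$, and for $f=\sum_{i=0}^{k-1}f_ix^i\in\mathbb{F}_{q^m}[x]$ write $f(D_a)=\sum_{i=0}^{k-1}f_iD_a^i$, an $\mathbb{F}_q$-linear endomorphism of $\mathbb{F}_{q^m}$. Reading off the rows of $M_k(\mathbf{a},\boldsymbol\beta)$, every codeword of $\mathcal{C}_k(\mathbf{a},\boldsymbol\beta)$ equals, for a (unique, since $\dim\mathcal{C}_k(\mathbf{a},\boldsymbol\beta)=k$) polynomial $f$ with $\deg f\le k-1$, the vector $\mathbf{c}_f$ whose $(s,i)$-th length-$r$ sub-block ($s\in[\ell]$, $i\in[\mu]$) is $(f(D_{a_s})(\beta_{(i-1)r+1}),\ldots,f(D_{a_s})(\beta_{ir}))$. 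As $f(D_{a_s})$ is $\mathbb{F}_q$-linear, the rank weight of that sub-block is $\dim_{\mathbb{F}_q}f(D_{a_s})(\mathcal{H}_i)=\dim_{\mathbb{F}_q}\mathcal{H}_i-\dim_{\mathbb{F}_q}(\mathcal{H}_i\cap\ker f(D_{a_s}))$, so, writing $\delta:=\sum_{i=1}^{\mu}(r-\dim_{\mathbb{F}_q}\mathcal{H}_i)\ge 0$,
$$ {\rm wt}_{SR}(\mathbf{c}_f)=gr-\ell\delta-\sum_{s=1}^{\ell}\sum_{i=1}^{\mu}\dim_{\mathbb{F}_q}\bigl(\mathcal{H}_i\cap\ker f(D_{a_s})\bigr). $$
Thus $\mathcal{C}_k(\mathbf{a},\boldsymbol\beta)$ is MSRD if, and only if, $\ell\delta+\sum_{s,i}\dim_{\mathbb{F}_q}(\mathcal{H}_i\cap\ker f(D_{a_s}))\le k-1$ for every nonzero $f$ with $\deg f\le k-1$, so the whole theorem reduces to a statement about these kernels.

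The tool is the skew polynomial ring $\mathbb{F}_{q^m}[x;\sigma]$ with $\sigma=(\cdot)^q$, whose generalized operator evaluation at $a$ is precisely $f\mapsto f(D_a)$. I would invoke three standard facts from that theory (see \cite{linearizedRS,generalMSRD}): (i) $\dim_{\mathbb{F}_q}\ker f(D_a)\le\deg f$ for every $f\ne 0$; (ii) if $a_1,\ldots,a_\ell$ are pairwise non-conjugate over $\mathbb{F}_q$ — equivalently, by the hypothesis on $\mathbf{a}$, they have pairwise distinct norms $N_{\mathbb{F}_{q^m}/\mathbb{F}_q}$ — then $\sum_{s=1}^{\ell}\dim_{\mathbb{F}_q}\ker f(D_{a_s})\le\deg f$ for every $f\ne 0$; and (iii) for any $\mathbb{F}_q$-subspaces $\mathcal{V}_1,\ldots,\mathcal{V}_\ell\subseteq\mathbb{F}_{q^m}$ there is a nonzero $f$ with $\mathcal{V}_s\subseteq\ker f(D_{a_s})$ for all $s$ and $\deg f=\sum_{s}\dim_{\mathbb{F}_q}\mathcal{V}_s$ (a minimal such $f$ even satisfies $\ker f(D_{a_s})=\mathcal{V}_s$). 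Applying (ii) and (iii) to a polynomial annihilating $\mathcal{H}_1+\cdots+\mathcal{H}_\mu$ at every $a_s$ shows that the standing hypothesis $\dim\mathcal{C}_k(\mathbf{a},\boldsymbol\beta)=k$ forces $k\le\ell\dim_{\mathbb{F}_q}(\mathcal{H}_1+\cdots+\mathcal{H}_\mu)$ and guarantees $\mathbf{c}_f\ne\mathbf{0}$ whenever $f\ne 0$; both are used below.

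For the ``if'' direction, assume conditions 1 and 2, so $\delta=0$. Fix $f\ne 0$ with $\deg f\le k-1$ and $s\in[\ell]$, and set $W_i=\mathcal{H}_i\cap\ker f(D_{a_s})$. The claim is that the nonzero $W_i$ lie in direct sum inside $\ker f(D_{a_s})$: if not, take a subset $S$ of $\{i:W_i\ne\{0\}\}$ of least size with $\dim(\sum_{i\in S}W_i)<\sum_{i\in S}\dim W_i$ and pick $i_0\in S$; minimality makes $\{W_i\}_{i\in S\setminus\{i_0\}}$ a direct sum with $W_{i_0}\cap\sum_{i\in S\setminus\{i_0\}}W_i\ne\{0\}$, hence $\mathcal{H}_{i_0}\cap\sum_{i\in S\setminus\{i_0\}}\mathcal{H}_i\ne\{0\}$. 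By condition 2 this requires $|S\setminus\{i_0\}|\ge\min\{k,\mu\}$, which together with $|S|\le\mu$ forces $\min\{k,\mu\}=k$ and $|S\setminus\{i_0\}|\ge k$, so $\dim_{\mathbb{F}_q}\ker f(D_{a_s})\ge\dim_{\mathbb{F}_q}(\sum_{i\in S\setminus\{i_0\}}W_i)=\sum_{i\in S\setminus\{i_0\}}\dim_{\mathbb{F}_q}W_i\ge k$, contradicting (i). Hence $\sum_i\dim_{\mathbb{F}_q}W_i\le\dim_{\mathbb{F}_q}\ker f(D_{a_s})$; summing over $s$ and using (ii),
$$ \sum_{s=1}^{\ell}\sum_{i=1}^{\mu}\dim_{\mathbb{F}_q}\bigl(\mathcal{H}_i\cap\ker f(D_{a_s})\bigr)\le\sum_{s=1}^{\ell}\dim_{\mathbb{F}_q}\ker f(D_{a_s})\le\deg f\le k-1, $$
so ${\rm wt}_{SR}(\mathbf{c}_f)\ge gr-k+1$; with the Singleton bound this proves $\mathcal{C}_k(\mathbf{a},\boldsymbol\beta)$ is MSRD.

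For the converse I would argue by contraposition: if condition 1 or condition 2 fails, I build a nonzero $f$ with $\deg f\le k-1$ and ${\rm wt}_{SR}(\mathbf{c}_f)\le gr-k$. In each block $s$ I choose a subspace $\mathcal{V}_s$ that is a sum of subspaces of the $\mathcal{H}_i$'s — so that $\sum_i\dim(\mathcal{H}_i\cap\mathcal{V}_s)\ge\dim\mathcal{V}_s$ — with $\sum_s\dim\mathcal{V}_s=k-1$ (possible, by growing the $\mathcal{V}_s$ inside the $\mathcal{H}_i$'s one dimension at a time, since $k-1<k\le\ell\dim(\mathcal{H}_1+\cdots+\mathcal{H}_\mu)$), and I let $f$ be as in (iii), so $\ker f(D_{a_s})\supseteq\mathcal{V}_s$. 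If condition 1 fails, then $\delta\ge 1$ and any such choice gives $\ell\delta+\sum_{s,i}\dim(\mathcal{H}_i\cap\ker f(D_{a_s}))\ge\ell+(k-1)\ge k$. If instead $\delta=0$ but condition 2 fails, take the witnessing $i_0$ and $\Gamma\subseteq[\mu]\setminus\{i_0\}$ with $|\Gamma|\le\min\{k,\mu\}-1$, and a nonzero $v=\sum_{j\in\Gamma}v_j\in\mathcal{H}_{i_0}\cap\sum_{j\in\Gamma}\mathcal{H}_j$ with $v_j\in\mathcal{H}_j$ (after shrinking $\Gamma$ we may assume every $v_j\ne 0$); initialise one block's $\mathcal{V}_{s_0}$ from $\langle v\rangle+\sum_{j\in\Gamma}\langle v_j\rangle$, which has $\mathbb{F}_q$-dimension at most $|\Gamma|$ yet contributes at least $|\Gamma|+1$ to $\sum_i\dim(\mathcal{H}_i\cap\mathcal{V}_{s_0})$, and then enlarge the $\mathcal{V}_s$ within the $\mathcal{H}_i$'s up to the budget $k-1$; each enlargement step raises both the budget and the contribution by one, so a surplus of at least $1$ is preserved and $\sum_{s,i}\dim(\mathcal{H}_i\cap\ker f(D_{a_s}))\ge k$. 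Either way ${\rm wt}_{SR}(\mathbf{c}_f)=gr-\ell\delta-\sum_{s,i}\dim(\mathcal{H}_i\cap\ker f(D_{a_s}))\le gr-k$, so $\mathcal{C}_k(\mathbf{a},\boldsymbol\beta)$ is not MSRD. The step I expect to need the most care is exactly this last one — exhibiting, whenever a condition fails, a single $f$ that is simultaneously of degree at most $k-1$ and accounts for rank deficiency at least $k$ — i.e.\ the surplus bookkeeping above, where the hypotheses $\dim\mathcal{C}_k(\mathbf{a},\boldsymbol\beta)=k$ and the distinct-norms condition on $\mathbf{a}$ are used most delicately; the ``if'' direction and the dictionary with skew polynomials are comparatively routine once facts (i)--(iii) are available.
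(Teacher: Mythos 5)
The paper does not actually prove this statement: it is quoted verbatim as \cite[Th.\ 3.12]{generalMSRD}, so there is no in-paper proof to compare against. Judged on its own, your argument is correct and is essentially a reconstruction of the proof from \cite{generalMSRD}: you translate codewords into generalized operator evaluations $f(D_{a_s})$ of skew polynomials, reduce the sum-rank weight to $gr-\ell\delta-\sum_{s,i}\dim_{\mathbb{F}_q}(\mathcal{H}_i\cap\ker f(D_{a_s}))$ via rank--nullity, and then lean on the two pillars of that theory, namely $\dim_{\mathbb{F}_q}\ker f(D_a)\le\deg f$ and the bound $\sum_s\dim_{\mathbb{F}_q}\ker f(D_{a_s})\le\deg f$ for pairwise non-conjugate (equivalently, distinct-norm) evaluation points, together with the existence of annihilators of the expected degree. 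I checked the two places where the work really happens and both go through: in the ``if'' direction, the minimal-counterexample argument showing that the subspaces $W_i=\mathcal{H}_i\cap\ker f(D_{a_s})$ are in direct sum correctly plays Condition 2 against the kernel bound $\dim\ker f(D_{a_s})\le k-1$ (the case split $\min\{k,\mu\}=k$ is handled properly); in the ``only if'' direction, the surplus bookkeeping is sound because adding a vector of $\mathcal{H}_i\setminus\mathcal{V}_s$ to $\mathcal{V}_s$ raises both $\dim\mathcal{V}_s$ and $\dim(\mathcal{H}_i\cap\mathcal{V}_s)$ by exactly one and at least one respectively, the initial seed $\sum_{j\in\Gamma}\langle v_j\rangle$ does produce surplus $\ge 1$ because $i_0\notin\Gamma$, and the capacity bound $k\le\ell\dim_{\mathbb{F}_q}(\mathcal{H}_1+\cdots+\mathcal{H}_\mu)$ (which you correctly derive from $\dim\mathcal{C}_k(\mathbf{a},\boldsymbol\beta)=k$, and which also guarantees $\mathbf{c}_f\neq\mathbf{0}$ for $f\neq 0$) ensures the budget $k-1$ is reachable. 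The only caveat is that facts (i)--(iii) are invoked rather than proved, but they are standard and are established in the cited references, so this is an acceptable level of granularity for a result the paper itself imports without proof.
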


Several constructions of MSRD codes based on Theorem \ref{th extended moore matrix is msrd sufficient} were obtained in \cite{generalMSRD}. These include linearized Reed--Solomon codes \cite{linearizedRS} by taking $ \mu = 1 $ (in that case, Condition 2 is empty and Condition 1 means that $ \beta_1, \beta_2, \ldots, \beta_r $ are $ \mathbb{F}_q $-linearly independent). 

For our purposes, we also need to consider the $ k $-dimensional linear codes $ \mathcal{D}_k(\mathbf{a}, \boldsymbol\beta) \subseteq \mathbb{F}_{q^m}^{\ell \eta} $ with generator matrices $ M^\prime_k(\mathbf{a}, \boldsymbol\beta) = $
\begin{equation*}
\left( \begin{array}{lll|c|lll}
 \beta_1^q a_1 & \ldots & \beta_\eta^q a_1 & \ldots & \beta_1^q a_\ell & \ldots & \beta_\eta^q a_\ell \\
\beta_1^{q^2} a_1^{\frac{q^2-1}{q-1}} & \ldots & \beta_\eta^{q^2} a_1^{\frac{q^2-1}{q-1}} & \ldots &  \beta_1^{q^2} a_\ell^{\frac{q^2-1}{q-1}} & \ldots &  \beta_\eta^{q^2} a_\ell^{\frac{q^2-1}{q-1}} \\
\vdots & \ddots & \vdots & \ddots & \vdots & \ddots & \vdots \\
 \beta_1^{q^k} a_1^{\frac{q^k-1}{q-1}} & \ldots &  \beta_\eta^{q^k} a_1^{\frac{q^k-1}{q-1}} & \ldots &  \beta_1^{q^k} a_\ell^{\frac{q^k-1}{q-1}} & \ldots &  \beta_\eta^{q^k} a_\ell^{\frac{q^k-1}{q-1}} \\
\end{array} \right),
\end{equation*}
for $ k \in [\ell \eta] $. Observe that we have the following inclusion graph:
$$ \begin{array}{ccccc}
 & & \mathcal{C}_k(\mathbf{a}, \boldsymbol\beta) & & \\
 & \nearrow &  & \nwarrow & \\
\mathcal{C}_{k-1}(\mathbf{a}, \boldsymbol\beta) & & & & \mathcal{D}_{k-1}(\mathbf{a}, \boldsymbol\beta) \\
 & \nwarrow &  & \nearrow & \\
 & & \mathcal{D}_{k-2}(\mathbf{a}, \boldsymbol\beta) . & & 
\end{array} $$
The codes $ \mathcal{C}_k(\mathbf{a}, \boldsymbol\beta) $ are MSRD given Conditions 1 and 2 in Theorem \ref{th extended moore matrix is msrd sufficient}. We now show that the same conditions turn the codes $ \mathcal{D}_k(\mathbf{a}, \boldsymbol\beta) $ into MSRD codes.

\begin{lemma} \label{lemma for doubly extended lin RS codes}
Let $ \ell $, $ \mu $ and $ r $ be positive integers, let $ \mathbf{a} = ( a_1, a_2, \ldots, a_\ell) \in (\mathbb{F}_{q^m}^*)^\ell $ and $ \boldsymbol\beta = (\beta_1, \beta_2, \ldots, \beta_{\mu r}) \in \mathbb{F}_{q^m}^{\mu r} $ as in Theorem \ref{th extended moore matrix is msrd sufficient}, and set $ g = \ell \mu $. For $ k \in [g r] $, $ \mathcal{C}_k(\mathbf{a}, \boldsymbol\beta) $ is MSRD if, and only if, so is $ \mathcal{D}_k(\mathbf{a}, \boldsymbol\beta) $, in both cases over $ \mathbb{F}_q $ for the length partition $ (g,r) $.
\end{lemma}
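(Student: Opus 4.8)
The plan is to exhibit $ \mathcal{D}_k(\mathbf{a}, \boldsymbol\beta) $ as the image of an extended Moore code $ \mathcal{C}_k(\mathbf{a}^q, \boldsymbol\beta^q) $ --- with $ \mathbf{a}^q = (a_1^q, \ldots, a_\ell^q) $ and $ \boldsymbol\beta^q = (\beta_1^q, \ldots, \beta_{\mu r}^q) $ --- under a map that is a sum-rank isometry, and then to note that passing from $ (\mathbf{a}, \boldsymbol\beta) $ to $ (\mathbf{a}^q, \boldsymbol\beta^q) $ does not affect the hypotheses of Theorem \ref{th extended moore matrix is msrd sufficient}.

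The computational core is a factorization of $ M^\prime_k(\mathbf{a}, \boldsymbol\beta) $. Index its rows by $ j = 0, 1, \ldots, k-1 $, so that the $ (j+1) $-st displayed row carries the exponent $ q^{j+1} $; its entry in block $ i $ and $ \beta $-coordinate $ s $ is $ \beta_s^{q^{j+1}} a_i^{(q^{j+1}-1)/(q-1)} $. Using $ (q^{j+1}-1)/(q-1) = 1 + q\,(q^j-1)/(q-1) $ together with $ \beta_s^{q^{j+1}} = (\beta_s^q)^{q^j} $, this entry equals $ a_i \cdot (\beta_s^q)^{q^j} (a_i^q)^{(q^j-1)/(q-1)} $, i.e.\ exactly $ a_i $ times the entry in the same position of $ M_k(\mathbf{a}^q, \boldsymbol\beta^q) $. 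Hence $ M^\prime_k(\mathbf{a}, \boldsymbol\beta) = M_k(\mathbf{a}^q, \boldsymbol\beta^q)\,\Delta $, where $ \Delta $ is the diagonal matrix scaling every column of the $ i $-th block (of length $ \mu r $) by $ a_i \in \mathbb{F}_{q^m}^* $, and therefore $ \mathcal{D}_k(\mathbf{a}, \boldsymbol\beta) = \mathcal{C}_k(\mathbf{a}^q, \boldsymbol\beta^q)\cdot\Delta $. Since $ g = \ell\mu $, each block of length $ \mu r $ is a union of $ \mu $ consecutive sum-rank blocks of length $ r $, so $ \Delta $ multiplies each length-$ r $ block by a single nonzero scalar of $ \mathbb{F}_{q^m} $; as multiplication by a nonzero scalar is an $ \mathbb{F}_q $-linear bijection of $ \mathbb{F}_{q^m} $, it preserves the $ \mathbb{F}_q $-dimension of the span of a block's coordinates, so the map induced by $ \Delta $ is a sum-rank isometry for the partition $ (g,r) $. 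Isometries preserve length, dimension and minimum distance, hence the MSRD property, so $ \mathcal{D}_k(\mathbf{a}, \boldsymbol\beta) $ is MSRD if and only if $ \mathcal{C}_k(\mathbf{a}^q, \boldsymbol\beta^q) $ is.

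It then remains to compare the hypotheses of Theorem \ref{th extended moore matrix is msrd sufficient} for $ (\mathbf{a}, \boldsymbol\beta) $ and for $ (\mathbf{a}^q, \boldsymbol\beta^q) $. First, $ N_{\mathbb{F}_{q^m}/\mathbb{F}_q}(a^q) = N_{\mathbb{F}_{q^m}/\mathbb{F}_q}(a)^q = N_{\mathbb{F}_{q^m}/\mathbb{F}_q}(a) $ since the norm lies in $ \mathbb{F}_q $, so the norms of $ a_1^q, \ldots, a_\ell^q $ are still pairwise distinct and $ \mathbf{a}^q \in (\mathbb{F}_{q^m}^*)^\ell $; thus $ (\mathbf{a}^q, \boldsymbol\beta^q) $ is an admissible input. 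Second, the subspace of (\ref{eq subspace from h-lin indep}) associated with $ \boldsymbol\beta^q $ is the Frobenius image $ \mathcal{H}_i^{q} = \{ h^q : h \in \mathcal{H}_i \} $; since $ x \mapsto x^q $ is an $ \mathbb{F}_q $-linear automorphism of $ \mathbb{F}_{q^m} $, we have $ \dim_{\mathbb{F}_q}(\mathcal{H}_i^q) = \dim_{\mathbb{F}_q}(\mathcal{H}_i) $ and $ \mathcal{H}_i^q \cap \big(\sum_{j\in\Gamma}\mathcal{H}_j^q\big) = \big(\mathcal{H}_i \cap \sum_{j\in\Gamma}\mathcal{H}_j\big)^q $, which is $ \{0\} $ exactly when $ \mathcal{H}_i \cap \sum_{j\in\Gamma}\mathcal{H}_j = \{0\} $. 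Hence Conditions 1 and 2 hold for $ (\mathbf{a}^q, \boldsymbol\beta^q) $ if and only if they hold for $ (\mathbf{a}, \boldsymbol\beta) $, and Theorem \ref{th extended moore matrix is msrd sufficient} yields the claim. The only real obstacle is spotting the factorization $ M^\prime_k(\mathbf{a}, \boldsymbol\beta) = M_k(\mathbf{a}^q, \boldsymbol\beta^q)\,\Delta $; after that, the two remaining points --- that a column rescaling which is constant on each sum-rank block is an isometry, and that the Frobenius preserves both the norm and the intersection conditions --- are routine. The one place to stay careful is that $ \Delta $ is constant on each length-$ r $ sum-rank block, which holds precisely because each Moore block of length $ \mu r $ is a union of $ \mu $ such blocks, so the scalars $ a_i $ never split a sum-rank block.
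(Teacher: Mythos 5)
Your proof is correct, and its computational core is the same identity the paper uses: your factorization $M^\prime_k(\mathbf{a}, \boldsymbol\beta) = M_k(\mathbf{a}^q, \boldsymbol\beta^q)\,\Delta$ is literally the paper's $M^\prime_k(\mathbf{a}, \boldsymbol\beta) = M_k(\mathbf{a}, \boldsymbol\beta)^q\,{\rm diag}(a_1,\ldots,a_1|\ldots|a_\ell,\ldots,a_\ell)$, since the entrywise Frobenius of an extended Moore matrix is the extended Moore matrix of the Frobenius-transformed parameters. Where you diverge is in the endgame. The paper absorbs both the Frobenius and the block scaling into a single map $\phi$ and invokes the fact (citing \cite[Cor. 3.8]{sum-rank-products}) that such a semilinear map is a sum-rank isometry, so $\mathcal{D}_k(\mathbf{a},\boldsymbol\beta)$ is an isometric image of $\mathcal{C}_k(\mathbf{a},\boldsymbol\beta)$ itself and the equivalence is immediate, with no appeal to Theorem \ref{th extended moore matrix is msrd sufficient}. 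You instead use only the (easier) linear isometry given by $\Delta$, landing on the code $\mathcal{C}_k(\mathbf{a}^q,\boldsymbol\beta^q)$ with \emph{different} parameters, and then route back to $\mathcal{C}_k(\mathbf{a},\boldsymbol\beta)$ through the characterization of Theorem \ref{th extended moore matrix is msrd sufficient}, checking that the Frobenius preserves the distinct-norm hypothesis and Conditions 1 and 2; all of these verifications are done correctly (in particular $N_{\mathbb{F}_{q^m}/\mathbb{F}_q}(a^q)=N_{\mathbb{F}_{q^m}/\mathbb{F}_q}(a)$ and $\mathcal{H}_i \mapsto \mathcal{H}_i^q$). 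Your route avoids the external semilinear-isometry citation at the cost of leaning twice on the characterization theorem; the paper's route is shorter and does not need the MSRD criterion at all, only the isometry. A small simplification available to you: the coordinatewise Frobenius is itself visibly a sum-rank isometry from $\mathcal{C}_k(\mathbf{a},\boldsymbol\beta)$ onto $\mathcal{C}_k(\mathbf{a}^q,\boldsymbol\beta^q)$, which would let you skip the parameter-comparison step entirely and recover the paper's argument.
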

\begin{proof}
For $ a, \beta \in \mathbb{F}_{q^m} $ and a positive integer $ i $, we have that
$$ \beta^{q^i} a^{\frac{q^i-1}{q-1}} = \beta^{q^i} a^{q^{i-1}} \cdots a^q \cdot a = \left( \beta^{q^{i-1}} a^{q^{i-2}} \cdots a^q \cdot a \right)^q a = \left( \beta^{q^{i-1}} a^{\frac{q^{i-2}-1}{q-1}} \right)^q a . $$
Hence it holds that
$$ M^\prime_k(\mathbf{a}, \boldsymbol\beta) = M_k(\mathbf{a}, \boldsymbol\beta)^q {\rm diag}(a_1, \ldots, a_1 | \ldots | a_\ell , \ldots , a_\ell), $$
where $ M_k(\mathbf{a}, \boldsymbol\beta)^q $ means that we raise every entry of $ M_k(\mathbf{a}, \boldsymbol\beta) $ to the $ q $th power, and $ {\rm diag}(\cdot) $ denotes diagonal matrix. In particular, the same holds for the corresponding codes, i.e., 
$$ \mathcal{D}_k(\mathbf{a}, \boldsymbol\beta) = \mathcal{C}_k(\mathbf{a}, \boldsymbol\beta)^q {\rm diag}(a_1, \ldots, a_1 | \ldots | a_\ell , \ldots , a_\ell), $$
where $ \mathcal{C}_k(\mathbf{a}, \boldsymbol\beta)^q $ means that we raise every component of every codeword of $ \mathcal{C}_k(\mathbf{a}, \boldsymbol\beta) $ to the $ q $th power. Now, observe that the map $ \phi : \mathbb{F}_{q^m}^{gr} \longrightarrow \mathbb{F}_{q^m}^{gr} $ given by
$$ \phi \left( c_1 , \ldots, c_{\mu r} | \ldots | c_{(\ell - 1)(\mu r) + 1}, \ldots , c_{\ell (\mu r)} \right) = \left( c_1^q a_1 , \ldots, c_{\mu r}^q a_1 | \ldots | c_{(\ell - 1)(\mu r) + 1}^q a_\ell, \ldots , c_{\ell (\mu r)}^q a_\ell \right) $$
is a semilinear isometry for the sum-rank metric over $ \mathbb{F}_q $ for the length partition $ (g,r) $, since $ a_i \neq 0 $, for $ i \in [\ell] $ (see \cite[Cor. 3.8]{sum-rank-products}). Hence the result follows.
\end{proof}

Therefore, we are in the situation of Corollary \ref{cor multiply singleton} for the sum-rank metric. For this reason, we define the following codes.

\begin{definition}
Let $ \mathbf{a} = ( a_1, a_2, \ldots, a_\ell ) \in (\mathbb{F}_{q^m}^*)^\ell $ be as in Definition \ref{def moore matrix extended}. Let $ \boldsymbol\beta = ( \beta_1, \beta_2, \ldots, \beta_\eta ) \in \mathbb{F}_{q^m}^{\eta} $ be arbitrary, for a positive integer $ \eta $. For $ k = 2,3, \ldots, \ell \eta $, we define the \textit{doubly extended Moore matrix} $ M^e_k(\mathbf{a}, \boldsymbol\beta) \in \mathbb{F}_{q^m}^{k \times (\ell \eta + 2)} $ by $ M^e_k(\mathbf{a}, \boldsymbol\beta) = $
\begin{equation*}
\left( \begin{array}{lll|c|lll|cc}
\beta_1 & \ldots & \beta_\eta & \ldots & \beta_1 & \ldots & \beta_\eta & 1 & 0 \\
 \beta_1^q a_1 & \ldots & \beta_\eta^q a_1 & \ldots & \beta_1^q a_\ell & \ldots & \beta_\eta^q a_\ell & 0 & 0 \\
\beta_1^{q^2} a_1^{\frac{q^2-1}{q-1}} & \ldots & \beta_\eta^{q^2} a_1^{\frac{q^2-1}{q-1}} & \ldots &  \beta_1^{q^2} a_\ell^{\frac{q^2-1}{q-1}} & \ldots &  \beta_\eta^{q^2} a_\ell^{\frac{q^2-1}{q-1}} & 0 & 0 \\
\vdots & \ddots & \vdots & \ddots & \vdots & \ddots & \vdots & \vdots & \vdots \\
 \beta_1^{q^{k-1}} a_1^{\frac{q^{k-1}-1}{q-1}} & \ldots &  \beta_\eta^{q^{k-1}} a_1^{\frac{q^{k-1}-1}{q-1}} & \ldots &  \beta_1^{q^{k-1}} a_\ell^{\frac{q^{k-1}-1}{q-1}} & \ldots &  \beta_\eta^{q^{k-1}} a_\ell^{\frac{q^{k-1}-1}{q-1}} & 0 & 1 \\
\end{array} \right),
\end{equation*}
and we denote by $ \mathcal{C}^e_k(\mathbf{a}, \boldsymbol\beta) \subseteq \mathbb{F}_{q^m}^{\ell \eta + 2} $ the $ k $-dimensional linear code generated by $ M^e_k(\mathbf{a}, \boldsymbol\beta) $.
\end{definition}

Thus, by Corollary \ref{cor multiply singleton} and Lemma \ref{lemma for doubly extended lin RS codes}, we deduce the following.

\begin{corollary} \label{cor doubly extended msrd codes}
Let $\ell $, $ \mu $ and $ r $ be positive integers, define $ g = \ell \mu $ and $ n = gr $, and let $ \mathbf{a} = ( a_1, a_2, \ldots, a_\ell) \in (\mathbb{F}_{q^m}^*)^\ell $ and $ \boldsymbol\beta = (\beta_1, \beta_2, \ldots, \beta_{\mu r}) \in \mathbb{F}_{q^m}^{\mu r} $ as in Theorem \ref{th extended moore matrix is msrd sufficient}. For $ k = 2,3, \ldots, n $, $ \mathcal{C}_k(\mathbf{a}, \boldsymbol\beta) \subseteq \mathbb{F}_{q^m}^n $ is MSRD (i.e., Conditions 1 and 2 in Theorem \ref{th extended moore matrix is msrd sufficient} hold) if, and only if, $ \mathcal{C}^e_k(\mathbf{a}, \boldsymbol\beta) \subseteq \mathbb{F}_{q^m}^{n + 2} $ is MSRD for the extended sum-rank metric
$$ {\rm d}_e( (\mathbf{c}, c_{n+1}, c_{n+2}), (\mathbf{d}, d_{n+1}, d_{n+2})) = {\rm d}_{SR}(\mathbf{c},\mathbf{d}) + {\rm d}_H((c_{n+1}, c_{n+2}), (d_{n+1}, d_{n+2})), $$
for $ \mathbf{c} , \mathbf{d} \in \mathbb{F}_{q^m}^n $ and $ c_{n+1}, c_{n+2}, d_{n+1}, d_{n+2} \in \mathbb{F}_{q^m} $, where $ {\rm d}_{SR} $ denotes the sum-rank metric in $ \mathbb{F}_{q^m}^n $ over $ \mathbb{F}_q $ for the length partition $ (g,r) $. 
\end{corollary}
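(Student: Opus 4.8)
The plan is to recognize the claim as the instantiation of Corollary~\ref{cor multiply singleton} for $ t = 2 $ and $ {\rm d} = {\rm d}_{SR} $, combined with Lemma~\ref{lemma for doubly extended lin RS codes} and an elementary monotonicity observation about the conditions in Theorem~\ref{th extended moore matrix is msrd sufficient}.

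First I would permute the rows of $ M^e_k(\mathbf{a}, \boldsymbol\beta) $ (and the two extra columns accordingly, which does not change the generated code) so that the two rows carrying the identity columns, i.e., the ``$ q^0 $'' row and the ``$ q^{k-1} $'' row, come first. After this reordering, $ M^e_k(\mathbf{a}, \boldsymbol\beta) $ has precisely the shape of the matrix $ G_e $ in Theorem~\ref{th multiply extended} with $ t = 2 $, where $ \mathbf{g}_1 $ is the $ q^0 $ Moore row, $ \mathbf{g}_2 $ is the $ q^{k-1} $ Moore row, and $ \mathbf{g}_3, \ldots, \mathbf{g}_k $ are the $ q^1, \ldots, q^{k-2} $ Moore rows; these are $ \mathbb{F}_{q^m} $-linearly independent since $ \mathcal{C}_k(\mathbf{a}, \boldsymbol\beta) $ is $ k $-dimensional by definition. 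Reading off the codes $ \mathcal{C}_I $ of Theorem~\ref{th multiply extended} from their exponent patterns, one gets $ \mathcal{C}_{\{1,2\}} = \mathcal{C}_k(\mathbf{a}, \boldsymbol\beta) $, $ \mathcal{C}_{\{1\}} = \mathcal{C}_{k-1}(\mathbf{a}, \boldsymbol\beta) $, $ \mathcal{C}_{\{2\}} = \mathcal{D}_{k-1}(\mathbf{a}, \boldsymbol\beta) $ and $ \mathcal{C}_{\varnothing} = \mathcal{D}_{k-2}(\mathbf{a}, \boldsymbol\beta) $ (with the convention $ \mathcal{D}_0 = \{ \mathbf{0} \} $ when $ k = 2 $), which is exactly the inclusion diagram displayed before the lemma. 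Corollary~\ref{cor multiply singleton} then gives that $ \mathcal{C}^e_k(\mathbf{a}, \boldsymbol\beta) $ is MSRD for $ {\rm d}_e $ if, and only if, all four of $ \mathcal{C}_k(\mathbf{a}, \boldsymbol\beta) $, $ \mathcal{C}_{k-1}(\mathbf{a}, \boldsymbol\beta) $, $ \mathcal{D}_{k-1}(\mathbf{a}, \boldsymbol\beta) $ and $ \mathcal{D}_{k-2}(\mathbf{a}, \boldsymbol\beta) $ are MSRD for $ {\rm d}_{SR} $.

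It then remains to prove that $ \mathcal{C}_k(\mathbf{a}, \boldsymbol\beta) $ being MSRD is equivalent to all four codes above being MSRD; one implication is trivial. For the other, suppose $ \mathcal{C}_k(\mathbf{a}, \boldsymbol\beta) $ satisfies Conditions~1 and~2 of Theorem~\ref{th extended moore matrix is msrd sufficient}. Condition~1 does not involve the dimension parameter, and Condition~2 for any parameter $ k^\prime \leq k $ is implied by Condition~2 for $ k $, since $ \min\{k^\prime, \mu\} - 1 \leq \min\{k, \mu\} - 1 $ only enlarges the family of admissible sets $ \Gamma $; hence $ \mathcal{C}_{k-1}(\mathbf{a}, \boldsymbol\beta) $ and $ \mathcal{C}_{k-2}(\mathbf{a}, \boldsymbol\beta) $ are MSRD (the case $ k = 2 $, where $ \mathcal{C}_0 = \{ \mathbf{0} \} $, being trivial). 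Applying Lemma~\ref{lemma for doubly extended lin RS codes} then yields that $ \mathcal{D}_{k-1}(\mathbf{a}, \boldsymbol\beta) $ and $ \mathcal{D}_{k-2}(\mathbf{a}, \boldsymbol\beta) $ are MSRD as well, which finishes the argument.

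I do not expect a serious obstacle: the content is mostly bookkeeping — matching the reordered generator matrix to $ G_e $ and correctly identifying the four auxiliary codes $ \mathcal{C}_I $. The two points requiring care are the degenerate case $ k = 2 $, where $ \mathcal{C}_\varnothing $ is the zero code and must be declared (trivially) MSRD, and getting the direction of the monotonicity of Condition~2 right.
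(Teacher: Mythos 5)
Your proposal is correct and follows essentially the same route as the paper, which simply invokes Corollary~\ref{cor multiply singleton} and Lemma~\ref{lemma for doubly extended lin RS codes} after displaying the inclusion graph identifying the four subcodes $ \mathcal{C}_I $ as $ \mathcal{C}_k, \mathcal{C}_{k-1}, \mathcal{D}_{k-1}, \mathcal{D}_{k-2} $. You have merely made explicit the details the paper leaves implicit (the row reordering, the identification of the $ \mathcal{C}_I $, the monotonicity of Condition~2 in $ k $, and the degenerate case $ k=2 $), all of which check out.
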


In particular, if $ \ell = q-1 $ and $ \boldsymbol\beta = (\beta_1, \beta_2, \ldots, $ $ \beta_{\mu r}) $ $ \in \mathbb{F}_{q^m}^{\mu r} $ satisfies Conditions 1 and 2 in Theorem \ref{th extended moore matrix is msrd sufficient}, then the doubly extended code $ \mathcal{C}^e_k(\mathbf{a}, \boldsymbol\beta) \subseteq \mathbb{F}_{q^m}^{n + 2} $ is MSRD as in the corollary above, where $ n = (q-1)\mu r $ and where we consider in $ \mathbb{F}_{q^m}^n $ the sum-rank metric over $ \mathbb{F}_q $ for the length partition $ (g,r) $, $ g = (q-1)\mu $. See \cite{generalMSRD} for seven concrete explicit families of MSRD codes constructed in this way. All of them can be doubly extended as mentioned in this paragraph while preserving their MSRD property. 

In particular, choosing $ \mu = 1 $, Corollary \ref{cor doubly extended msrd codes} recovers \cite[Th. 4.6]{neri-oneweight} as a particular case for linearized Reed--Solomon codes, which in turn recovers the classical result \cite[Th. 5.3.4]{pless} for classical Reed--Solomon codes.

\section{Triply extended MSRD codes} \label{sec triply extended}

In contrast to the case of doubly extended MSRD codes (Section \ref{sec doubly extended MSRD}), triply extended MSRD codes are not always MSRD, as we show in this section. We will only consider $ 3 $-dimensional codes. 

We start with cases where triple extension preserves the MSRD property. Notice that the case of ($ 3 $-dimensional) classical Reed--Solomon codes and the Hamming metric in characteristic $ 2 $ \cite[p. 326, Ch. 11, Th. 10]{macwilliamsbook} is recovered from the following theorem by taking $ m=\mu=r=1 $ and $ \beta_1 = 1 $.

\begin{theorem} \label{th triply extended LRS}
Let $ m $ be odd, let $ q $ be even, and set $ n = (q-1)\mu r $ for positive integers $ \mu $ and $ r $. Let $ \boldsymbol\beta = (\beta_1, \ldots, \beta_{\mu r}) \in \mathbb{F}_{q^m}^{\mu r} $ satisfy Conditions 1 and 2 in Theorem \ref{th extended moore matrix is msrd sufficient}. Let $ \mathbf{a} = ( a_1, a_2, \ldots, a_{q-1} ) \in (\mathbb{F}_{q^m}^*)^{q-1} $ be such that $ N_{\mathbb{F}_{q^m}/\mathbb{F}_q}(a_i) \neq N_{\mathbb{F}_{q^m}/\mathbb{F}_q}(a_j) $ if $ i \neq j $. The triply extended code $ \mathcal{C}_e \subseteq \mathbb{F}_{q^m}^{n + 3} $ with generator matrix
$$ G_e = \left( \begin{array}{ccc|c|ccc|ccc}
\beta_1 & \ldots & \beta_{\mu r} & \ldots & \beta_1 & \ldots & \beta_{\mu r} & 1 & 0 & 0 \\
a_1 \beta_1^q & \ldots & a_1 \beta_{\mu r}^q & \ldots & a_{q-1} \beta_1^q & \ldots & a_{q-1} \beta_{\mu r}^q & 0 & 1 & 0 \\
a_1^{q+1} \beta_1^{q^2} & \ldots & a_1^{q+1} \beta_{\mu r}^{q^2} & \ldots & a_{q-1}^{q+1} \beta_1^{q^2} & \ldots & a_{q-1}^{q+1} \beta_{\mu r}^{q^2} & 0 & 0 & 1
\end{array} \right) \in \mathbb{F}_{q^m}^{3 \times (n + 3)} $$
is MSRD for the extended sum-rank metric
$$ {\rm d}_e( (\mathbf{c}, \mathbf{c}^\prime), (\mathbf{d}, \mathbf{d}^\prime)) = {\rm d}_{SR}(\mathbf{c},\mathbf{d}) + {\rm d}_H(\mathbf{c}^\prime, \mathbf{d}^\prime), $$
for $ \mathbf{c} , \mathbf{d} \in \mathbb{F}_{q^m}^n $ and $ \mathbf{c}^\prime, \mathbf{d}^\prime \in \mathbb{F}_{q^m}^3 $, where $ {\rm d}_{SR} $ denotes the sum-rank metric in $ \mathbb{F}_{q^m}^n $ over $ \mathbb{F}_q $ for the length partition $ (g, r) $, where $ g = (q-1) \mu $.
\end{theorem}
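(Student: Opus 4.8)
The plan is to invoke Corollary \ref{cor multiply singleton} with $ t = k = 3 $ and $ {\rm d} = {\rm d}_{SR} $, so that the code $ \mathcal{C}_e $ and the metric $ {\rm d}_e $ of Theorem \ref{th multiply extended} coincide with the ones in the present statement. Write $ \mathbf{g}_1, \mathbf{g}_2, \mathbf{g}_3 \in \mathbb{F}_{q^m}^n $ for the truncations of the three rows of $ G_e $ to the first $ n $ coordinates; these are linearly independent because $ \mathcal{C}_3(\mathbf{a},\boldsymbol\beta) $ is $ 3 $-dimensional. The problem then reduces to showing that each $ \mathcal{C}_I = \langle \mathbf{g}_i \mid i \in I \rangle $ is MSRD over $ \mathbb{F}_q $ for the partition $ (g,r) $, for every $ I \subseteq [3] $. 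Most of these are already known: $ \mathcal{C}_{\{1,2,3\}} = \mathcal{C}_3(\mathbf{a},\boldsymbol\beta) $, $ \mathcal{C}_{\{1,2\}} = \mathcal{C}_2(\mathbf{a},\boldsymbol\beta) $ and $ \mathcal{C}_{\{1\}} = \mathcal{C}_1(\mathbf{a},\boldsymbol\beta) $ are MSRD by Theorem \ref{th extended moore matrix is msrd sufficient}, since Conditions 1 and 2 there are part of our hypotheses; $ \mathcal{C}_{\{2,3\}} = \mathcal{D}_2(\mathbf{a},\boldsymbol\beta) $ and $ \mathcal{C}_{\{2\}} = \mathcal{D}_1(\mathbf{a},\boldsymbol\beta) $ are MSRD by Lemma \ref{lemma for doubly extended lin RS codes}; $ \mathcal{C}_{\{3\}} $ is MSRD directly (every block of a nonzero codeword is a nonzero scalar multiple of $ (\beta_{(s-1)r+1}^{q^2},\ldots,\beta_{sr}^{q^2}) $, which has $ \mathbb{F}_q $-rank $ r $ by Condition 1, as $ x \mapsto x^{q^2} $ is an $ \mathbb{F}_q $-linear bijection of $ \mathbb{F}_{q^m} $); and $ \mathcal{C}_\varnothing = \{ \mathbf{0} \} $ is trivial. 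So everything reduces to the single code $ \mathcal{C}_{\{1,3\}} = \langle \mathbf{g}_1, \mathbf{g}_3 \rangle $, which is where the hypotheses that $ m $ is odd and $ q $ is even enter.

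Since $ \dim \mathcal{C}_{\{1,3\}} = 2 $, it suffices to prove $ {\rm d}_{SR}(\mathcal{C}_{\{1,3\}}) \geq n-1 $, the reverse inequality being the Singleton bound. Index the $ g = (q-1)\mu $ blocks of length $ r $ by pairs $ (i,s) $ with $ i \in [q-1] $ and $ s \in [\mu] $; the block $ \mathbf{b}_{i,s} $ of a codeword $ \lambda_1 \mathbf{g}_1 + \lambda_3 \mathbf{g}_3 $ equals $ \bigl( \lambda_1 \beta_{(s-1)r+t} + \lambda_3 a_i^{q+1} \beta_{(s-1)r+t}^{q^2} \bigr)_{t \in [r]} $. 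Putting $ \delta(\lambda_1,\lambda_3) = \sum_{i,s} \bigl( r - {\rm wt}_R(\mathbf{b}_{i,s}) \bigr) $, we have $ {\rm wt}_{SR}(\lambda_1 \mathbf{g}_1 + \lambda_3 \mathbf{g}_3) = n - \delta(\lambda_1,\lambda_3) $, so the goal is $ \delta(\lambda_1,\lambda_3) \leq 1 $ whenever $ (\lambda_1,\lambda_3) \neq (0,0) $. If $ \lambda_1 \lambda_3 = 0 $ this is immediate, exactly as in the previous paragraph (each block has full rank $ r $), so $ \delta = 0 $. If $ \lambda_1, \lambda_3 \neq 0 $, rescale by $ \lambda_1^{-1} $ and set $ \gamma = \lambda_3/\lambda_1 \in \mathbb{F}_{q^m}^* $. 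The $ \mathbb{F}_q $-linearized polynomial $ L_i(x) = x + \gamma a_i^{q+1} x^{q^2} $ satisfies $ r - {\rm wt}_R(\mathbf{b}_{i,s}) = \dim_{\mathbb{F}_q}(\ker L_i \cap \mathcal{H}_s) $, and therefore $ \delta(\lambda_1,\lambda_3) = \sum_{i \in [q-1]} \sum_{s \in [\mu]} \dim_{\mathbb{F}_q}(\ker L_i \cap \mathcal{H}_s) $.

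Now the two arithmetic hypotheses do the work. Because $ q $ is even, $ \ker L_i \setminus \{ 0 \} = \{ x \in \mathbb{F}_{q^m}^* \mid x^{q^2-1} = \gamma^{-1} a_i^{-(q+1)} \} $; because $ m $ is odd, $ \gcd(q^2-1, q^m-1) = q^{\gcd(2,m)} - 1 = q-1 $, so this set is either empty or a coset of the order-$ (q-1) $ group of $ (q^2-1) $-th roots of unity in $ \mathbb{F}_{q^m}^* $. As $ \ker L_i $ is an $ \mathbb{F}_q $-subspace, this forces $ \dim_{\mathbb{F}_q} \ker L_i \in \{ 0, 1 \} $, the value $ 1 $ occurring exactly when $ N(\gamma^{-1} a_i^{-(q+1)}) = 1 $, where $ N = N_{\mathbb{F}_{q^m}/\mathbb{F}_q} $; since $ N(a_i) \in \mathbb{F}_q^* $ and hence $ N(a_i^{q+1}) = N(a_i)^2 $, this amounts to $ N(\gamma) = N(a_i)^{-2} $. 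But $ N(a_1), \ldots, N(a_{q-1}) $ are $ q-1 $ distinct elements of $ \mathbb{F}_q^* $, hence all of $ \mathbb{F}_q^* $, and (using $ q $ even once more, so that $ y \mapsto y^{-2} $ permutes $ \mathbb{F}_q^* $) there is exactly one index $ i_0 \in [q-1] $ with $ N(a_{i_0})^{-2} = N(\gamma) $; for every other $ i $ we have $ \ker L_i = \{ 0 \} $. Hence $ \delta(\lambda_1,\lambda_3) = \sum_{s \in [\mu]} \dim_{\mathbb{F}_q}(\ker L_{i_0} \cap \mathcal{H}_s) $ with $ \dim_{\mathbb{F}_q} \ker L_{i_0} = 1 $; each term is $ 0 $ or $ 1 $, and if two terms (for $ s_1 \neq s_2 $) were equal to $ 1 $, then the line $ \ker L_{i_0} $ would lie in $ \mathcal{H}_{s_1} \cap \mathcal{H}_{s_2} = \{ 0 \} $, contradicting Condition 2 of Theorem \ref{th extended moore matrix is msrd sufficient} applied with $ \Gamma = \{ s_2 \} $ (legitimate, since $ |\Gamma| = 1 \leq \min\{3,\mu\} - 1 $ when $ \mu \geq 2 $, while $ \mu = 1 $ is vacuous). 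Thus $ \delta(\lambda_1,\lambda_3) \leq 1 $, so $ {\rm d}_{SR}(\mathcal{C}_{\{1,3\}}) = n-1 $ and $ \mathcal{C}_{\{1,3\}} $ is MSRD; Corollary \ref{cor multiply singleton} then yields the theorem.

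The crux, and the only place the parity conditions are used, is the analysis of $ \mathcal{C}_{\{1,3\}} $: the oddness of $ m $ caps $ \dim_{\mathbb{F}_q} \ker L_i $ at $ 1 $ through $ \gcd(q^2-1,q^m-1) = q-1 $, and the evenness of $ q $ makes the exceptional index $ i_0 $ unique through bijectivity of $ y \mapsto y^{-2} $ on $ \mathbb{F}_q^* $; Condition 2 then bounds the rank deficiency by $ 1 $. Everything else is bookkeeping resting on Theorem \ref{th extended moore matrix is msrd sufficient}, Lemma \ref{lemma for doubly extended lin RS codes}, and elementary Frobenius/scaling arguments, with some care needed about the interplay of the ambient field $ \mathbb{F}_{q^m} $ (for the Hamming-extended block and the linear algebra) and the base field $ \mathbb{F}_q $ (for the sum-rank metric).
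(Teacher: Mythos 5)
Your proof is correct, and for the one step that actually requires work it takes a genuinely different route from the paper. Both arguments begin identically: apply Corollary \ref{cor multiply singleton} with $t=k=3$, dispose of the subsets $\{1,2,3\}$, $\{1,2\}$, $\{1\}$, $\{2,3\}$, $\{2\}$, $\{3\}$, $\varnothing$ via Theorem \ref{th extended moore matrix is msrd sufficient} and Lemma \ref{lemma for doubly extended lin RS codes}, and reduce everything to showing that $\mathcal{C}_{\{1,3\}}=\langle \mathbf{g}_1,\mathbf{g}_3\rangle$ is MSRD. At that point the paper argues structurally: since $m$ is odd, $\tau(a)=a^{q^2}$ generates $\mathrm{Gal}(\mathbb{F}_{q^m}/\mathbb{F}_q)$ with fixed field $\mathbb{F}_q$, so the $\{1,3\}$-matrix is itself an extended Moore matrix with respect to $\tau$ and the elements $a_i^{q+1}=a_i^{(q^2-1)/(q-1)}$, whose norms $N(a_i)^2$ stay pairwise distinct because squaring is injective on $\mathbb{F}_q^*$ when $q$ is even; Theorem \ref{th extended moore matrix is msrd sufficient} (applied with this alternative generator of the Galois group) then finishes. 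You instead compute the sum-rank weight of every nonzero codeword directly, via the kernels of the linearized polynomials $x+\gamma a_i^{q+1}x^{q^2}$, using $\gcd(q^2-1,q^m-1)=q-1$ for $m$ odd, Hilbert 90 to locate the unique exceptional index $i_0$, and Condition 2 to cap the total rank deficiency at $1$. Your version is longer but fully self-contained and makes explicit exactly where each parity hypothesis bites; the paper's is shorter but leans on the fact (implicit in \cite{generalMSRD}) that Theorem \ref{th extended moore matrix is msrd sufficient} is insensitive to the choice of generator of the Galois group. The two uses of the hypotheses correspond precisely: your uniqueness of $i_0$ is the paper's distinctness of the $N(a_i^{q+1})$, and your $\dim_{\mathbb{F}_q}\ker L_i\le 1$ bound is the paper's observation that $\tau$ has fixed field $\mathbb{F}_q$.
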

\begin{proof}
By Corollary \ref{cor multiply singleton} and Lemma \ref{lemma for doubly extended lin RS codes}, we only need to show that the code with generator matrix
$$ G = \left( \begin{array}{ccc|c|ccc}
\beta_1 & \ldots & \beta_{\mu r} & \ldots & \beta_1 & \ldots & \beta_{\mu r} \\
a_1^{q+1} \beta_1^{q^2} & \ldots & a_1^{q+1} \beta_{\mu r}^{q^2} & \ldots & a_{q-1}^{q+1} \beta_1^{q^2} & \ldots & a_{q-1}^{q+1} \beta_{\mu r}^{q^2} 
\end{array} \right) \in \mathbb{F}_{q^m}^{2 \times n} $$
is MSRD over $ \mathbb{F}_q $ for the length partition $ (g, r) $. 

First, since $ q $ is even, then if $ a,b \in \mathbb{F}_q $ are such that $ a \neq b $, then $ a^2 - b^2 = (a-b)^2 \neq 0 $, hence $ a^2 \neq b^2 $. Therefore if $ i \neq j $, since $ N_{\mathbb{F}_{q^m}/\mathbb{F}_q} (a_i) \neq N_{\mathbb{F}_{q^m}/\mathbb{F}_q} (a_j) $, we deduce that
$$ N_{\mathbb{F}_{q^m}/\mathbb{F}_q} (a_i^{q+1}) = N_{\mathbb{F}_{q^m}/\mathbb{F}_q} (a_i)^2 \neq N_{\mathbb{F}_{q^m}/\mathbb{F}_q} (a_i)^2 = N_{\mathbb{F}_{q^m}/\mathbb{F}_q} (a_j^{q+1}). $$ 

Second, since $ m $ is odd, then $ \tau : \mathbb{F}_{q^m} \longrightarrow \mathbb{F}_{q^m} $ given by $ \tau (a) = a^{q^2} $, for $ a \in \mathbb{F}_{q^m} $, is a field automorphism such that $ \{ a \in \mathbb{F}_{q^m} \mid a^{q^2} = a \} = \mathbb{F}_q $. In particular, $ N_{\mathbb{F}_{q^m}/\mathbb{F}_q}(a) = a \tau(a) \cdots \tau^{m-1}(a) $, for $ a \in \mathbb{F}_{q^m} $. Hence the generator matrix $ G $ is an extended Moore matrix (Definition \ref{def moore matrix extended}) satisfying the conditions in Theorem \ref{th extended moore matrix is msrd sufficient}, and therefore the code it generates is MSRD and we are done. 
\end{proof}

On the other hand, when $ m $ is even or $ q $ is odd, a triply extended (full-length) linearized Reed--Solomon code is never MSRD.

\begin{proposition} \label{prop triply extended never msrd}
Let $ \beta_1, \beta_2 , \ldots, \beta_m \in \mathbb{F}_{q^m} $ be $ \mathbb{F}_q $-linearly independent and let $ \mathbf{a} = ( a_1, a_2, $ $ \ldots, $ $ a_{q-1} ) \in (\mathbb{F}_{q^m}^*)^{q-1} $ be such that $ N_{\mathbb{F}_{q^m}/\mathbb{F}_q}(a_i) \neq N_{\mathbb{F}_{q^m}/\mathbb{F}_q}(a_j) $ if $ i \neq j $. Set $ n = (q-1)m $. If $ m $ is even or $ q $ is odd, then the triply extended code $ \mathcal{C}_e \subseteq \mathbb{F}_{q^m}^{n + 3} $ with generator matrix
$$ G_e = \left( \begin{array}{ccc|c|ccc|ccc}
\beta_1 & \ldots & \beta_m & \ldots & \beta_1 & \ldots & \beta_m & 1 & 0 & 0 \\
a_1 \beta_1^q & \ldots & a_1 \beta_m^q & \ldots & a_{q-1} \beta_1^q & \ldots & a_{q-1} \beta_m^q & 0 & 1 & 0 \\
a_1^{q+1} \beta_1^{q^2} & \ldots & a_1^{q+1} \beta_m^{q^2} & \ldots & a_{q-1}^{q+1} \beta_1^{q^2} & \ldots & a_{q-1}^{q+1} \beta_m^{q^2} & 0 & 0 & 1
\end{array} \right) \in \mathbb{F}_{q^m}^{3 \times (n + 3)} $$
is not MSRD for the extended sum-rank metric
$$ {\rm d}_e( (\mathbf{c}, \mathbf{c}^\prime), (\mathbf{d}, \mathbf{d}^\prime)) = {\rm d}_{SR}(\mathbf{c},\mathbf{d}) + {\rm d}_H(\mathbf{c}^\prime, \mathbf{d}^\prime), $$
for $ \mathbf{c} , \mathbf{d} \in \mathbb{F}_{q^m}^n $ and $ \mathbf{c}^\prime, \mathbf{d}^\prime \in \mathbb{F}_{q^m}^3 $, where $ {\rm d}_{SR} $ denotes the sum-rank metric in $ \mathbb{F}_{q^m}^n $ over $ \mathbb{F}_q $ for the length partition $ (q-1, m) $.
\end{proposition}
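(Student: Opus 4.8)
The plan is to use Corollary~\ref{cor multiply singleton} together with Lemma~\ref{lemma for doubly extended lin RS codes} to reduce the non-MSRD claim for $ \mathcal{C}_e $ to a statement about one of the intermediate codes $ \mathcal{C}_I $ (with $ t = k = 3 $). By Corollary~\ref{cor multiply singleton}, $ \mathcal{C}_e $ is MSRD for $ {\rm d}_e $ if and only if every one of the codes $ \mathcal{C}_I $, $ I \subseteq [3] $, attains the Singleton bound for the sum-rank metric; so to show $ \mathcal{C}_e $ is \emph{not} MSRD it suffices to exhibit a single $ I \subseteq [3] $ for which $ \mathcal{C}_I $ fails. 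The natural candidate is $ I = \{1,3\} $ (i.e.\ deleting the middle row $ a_i \beta_j^q $), since then, by the same semilinear-isometry argument as in Lemma~\ref{lemma for doubly extended lin RS codes} (dividing the $ j $-th column of the $ i $-th block by $ a_i \beta_j^q $, which is a sum-rank isometry as $ a_i \neq 0 $ and the $ \beta_j $ are nonzero), the MSRD property of $ \mathcal{C}_{\{1,3\}} $ is equivalent to the MSRD property of the $ 2 $-dimensional code with generator matrix having rows $ (\beta_j^{1-q} a_i^{-1})_{i,j} $ and $ (a_i^q \beta_j^{q^2-q})_{i,j} $; after a further diagonal rescaling this is equivalent to asking whether the code with rows $ (\beta_j^{q^{-1}} )$-type entries and $ (a_i^{q+1}\beta_j^{q^2})$-type entries — i.e.\ exactly the matrix $ G $ appearing in the proof of Theorem~\ref{th triply extended LRS} — is MSRD for the partition $ (q-1,m) $.

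So the crux reduces to: when $ m $ is even or $ q $ is odd, the $ 2\times n $ matrix
$$ G = \left( \begin{array}{ccc|c|ccc}
\beta_1 & \ldots & \beta_m & \ldots & \beta_1 & \ldots & \beta_m \\
a_1^{q+1} \beta_1^{q^2} & \ldots & a_1^{q+1} \beta_m^{q^2} & \ldots & a_{q-1}^{q+1} \beta_1^{q^2} & \ldots & a_{q-1}^{q+1} \beta_m^{q^2}
\end{array} \right) $$
does \emph{not} generate an MSRD code for the partition $ (q-1,m) $. I would apply Theorem~\ref{th extended moore matrix is msrd sufficient} in reverse: $ G $ has the shape of an extended Moore matrix $ M_2(\mathbf{b},\boldsymbol\beta) $ with $ \ell = q-1 $, $ \eta = m $, $ \mu = 1 $, row-exponents $ 1, q^2 $, and node values $ b_i = a_i^{q+1} $ provided we reinterpret the Frobenius as $ \sigma = (\cdot)^{q^2} $; but the relevant norm is then $ N_\sigma(b) = b^{1+q^2+q^4+\cdots} $, and the MSRD criterion (Condition~1 plus the relevant genericity/distinctness of norms) \emph{fails} precisely because $ \sigma^{q-1} $-style or norm collisions occur. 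Concretely: when $ m $ is even, $ x \mapsto x^{q^2} $ does not generate $ \mathrm{Gal}(\mathbb{F}_{q^m}/\mathbb{F}_q) $ (its fixed field is $ \mathbb{F}_{q^{\gcd(2,m)}} = \mathbb{F}_{q^2} \supsetneq \mathbb{F}_q $), so the code is sum-rank equivalent to a linearized RS code for a \emph{shorter} field extension and its length $ (q-1)m $ exceeds what an MSRD code of dimension $ 2 $ over that extension can support; when $ q $ is odd, $ x\mapsto x^2 $ is not injective on $ \mathbb{F}_q^* $, so the squared norms $ N_{\mathbb{F}_{q^m}/\mathbb{F}_q}(a_i)^2 $ need not be distinct — but in fact the sharper obstruction is that $ a \mapsto a^{q+1} $ has nontrivial kernel behaviour making two of the $ b_i = a_i^{q+1} $ share a norm, or making Condition~1 fail — and either way Theorem~\ref{th extended moore matrix is msrd sufficient} is violated. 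I would make this precise by writing down an explicit nonzero codeword of $ \mathcal{C}_I $ of sum-rank weight strictly less than $ n - 2 + 1 = n-1 $ (equivalently, a nonzero evaluation vector lying in too few of the blocks), which directly contradicts the Singleton bound.

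The main obstacle will be the case analysis in this last step — pinning down exactly which condition of Theorem~\ref{th extended moore matrix is msrd sufficient} breaks, separately for ``$ m $ even'' and ``$ q $ odd'', and producing the explicit low-weight codeword rather than merely invoking a non-existence bound. I expect the cleanest route is: show that in either case there exist two indices $ i \neq i' $ with $ a_i^{q+1} $ and $ a_{i'}^{q+1} $ lying in the same coset of $ \ker(x \mapsto x^{q-1}) $ relative to the $ \sigma $-norm — equivalently, that the two ``blocks'' $ i $ and $ i' $ of $ G $ become $ \mathbb{F}_q $-proportional on a common $ 1 $-dimensional subspace of evaluation functionals — and then the codeword vanishing on all other blocks and of rank $ \le m-1 $ on these two gives total sum-rank weight $ \le 2(m-1) + 0 < n - 1 $ when $ q - 1 \ge 3 $ (the small cases $ q \le 3 $ being checked by hand). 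Verifying that such a collision is forced exactly when $ m $ is even or $ q $ is odd, and is \emph{excluded} when $ m $ is odd and $ q $ is even (consistent with Theorem~\ref{th triply extended LRS}), is the delicate arithmetic point; everything else is bookkeeping with the isometries already established in Lemma~\ref{lemma for doubly extended lin RS codes}.
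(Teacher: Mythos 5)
Your high-level plan matches the paper's: reduce via Corollary \ref{cor multiply singleton} to the sub-code $\mathcal{C}_{\{1,3\}}$, which is exactly the $2$-dimensional code generated by the displayed matrix $G$, and then show $G$ is not MSRD by exploiting that $x \mapsto x^{q^2}$ has fixed field $\mathbb{F}_{q^2}$ when $m$ is even, and that the squares of the norms $N_{\mathbb{F}_{q^m}/\mathbb{F}_q}(a_i)$ collide when $q$ is odd. Those are indeed the two arithmetic obstructions the paper uses. But two steps of your execution are wrong. First, the ``isometry'' you invoke to pass from $\mathcal{C}_{\{1,3\}}$ to $G$ --- dividing the $j$-th column of the $i$-th block by $a_i\beta_j^q$ --- is not a sum-rank isometry: within a block, the only admissible column operations are right-multiplication by a matrix in $\mathrm{GL}_m(\mathbb{F}_q)$ together with a single global scalar from $\mathbb{F}_{q^m}^*$, and scaling different columns by different elements $\beta_j^q$ of $\mathbb{F}_{q^m}$ destroys the rank. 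Fortunately this detour is unnecessary: $\mathcal{C}_{\{1,3\}}$ is literally generated by rows $1$ and $3$ of $G_e$ restricted to the first $n$ coordinates, which is $G$ on the nose.

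Second, and more seriously, your endgame is impossible as stated. A nonzero codeword $(\lambda,\mu)G$ restricted to block $i$ is the evaluation of the $q^2$-linearized polynomial $\lambda x + \mu a_i^{q+1} x^{q^2}$ on the basis $\beta_1,\ldots,\beta_m$; its kernel has $\mathbb{F}_q$-dimension at most $2$, so no nonzero codeword can ``vanish on all other blocks,'' and the bound $2(m-1) < n-1$ is not the right target. What you actually need is a nonzero codeword whose total rank deficiency across all $q-1$ blocks is at least $2$, i.e.\ weight $\leq n-2$. The paper gets this as follows: for $m$ even, after an $\mathbb{F}_q$-linear change of basis one may assume $\beta_1,\beta_2 \in \mathbb{F}_{q^2}$, and then $(a_1^{q+1},-1)G$ has two zero entries in the first block, so that block alone loses rank $2$; for $q$ and $m$ both odd, the distinct norms $N(a_i)$ exhaust $\mathbb{F}_q^*$, so some pair satisfies $N(a_i) = -N(a_j)$, hence $N(a_i^{q+1}) = N(a_j^{q+1})$ with respect to the generator $x\mapsto x^{q^2}$ of the Galois group, and Hilbert's Theorem 90 produces $\beta$ with $a_i^{q+1}\beta = a_j^{q+1}\beta^{q^2}$, yielding (after suitable basis changes in blocks $i$ and $j$) one rank drop in each of those two blocks. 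Your sketch gestures at the norm collision but never invokes Hilbert 90 to convert it into an actual kernel element, and it is exactly that conversion, together with the correct weight accounting, that the proof requires.
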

\begin{proof}
We first consider the case where $ m $ is even. Since $ \mathbb{F}_{q^2} \subseteq \mathbb{F}_{q^m} $ in this case, there exists an invertible matrix $ A \in \mathbb{F}_q^{m \times m} $ such that the first two components of $ (\beta_1, \beta_2, \ldots, \beta_m) A \in \mathbb{F}_{q^m}^m $ lie in $ \mathbb{F}_{q^2} $. Since such a multiplication constitutes a linear sum-rank isometry, we may assume that $ \beta_1, \beta_2 \in \mathbb{F}_{q^2} $ without loss of generality. Let 
\begin{equation}
G = \left( \begin{array}{ccc|c|ccc}
\beta_1 & \ldots & \beta_m & \ldots & \beta_1 & \ldots & \beta_m \\
a_1^{q+1} \beta_1^{q^2} & \ldots & a_1^{q+1} \beta_m^{q^2} & \ldots & a_{q-1}^{q+1} \beta_1^{q^2} & \ldots & a_{q-1}^{q+1} \beta_m^{q^2} 
\end{array} \right) \in \mathbb{F}_{q^m}^{2 \times n}.
\label{eq generator for proof triply not msrd}
\end{equation}
Since $ \beta_i - \beta_i^{q^2} = 0 $ ($ \beta_i \in \mathbb{F}_{q^2} $), for $ i = 1,2 $, we conclude that the codeword $ ( a_1^{q+1}, -1 ) G $ has sum-rank weight at most $ n-2 $, hence the code generated by $ G $ is not MSRD over $ \mathbb{F}_q $ for the length partition $ (q-1,m) $. Thus the code generated by $ G_e $ is not MSRD with respect to $ {\rm d}_e $ by Corollary \ref{cor multiply singleton}. 

We now consider the case where both $ q $ and $ m $ are odd. By assumption, we have that $ \{ N_{\mathbb{F}_{q^m}/\mathbb{F}_q}(a_i) \mid i \in [q-1] \} = \mathbb{F}_q^* $. Since $ q $ is odd, there exist $ 1 \leq i < j \leq q-1 $ such that $ N_{\mathbb{F}_{q^m}/\mathbb{F}_q}(a_i) = - N_{\mathbb{F}_{q^m}/\mathbb{F}_q}(a_j) $. In particular, 
$$ N_{\mathbb{F}_{q^m}/\mathbb{F}_q} (a_i^{q+1}) = N_{\mathbb{F}_{q^m}/\mathbb{F}_q} (a_i)^2 = N_{\mathbb{F}_{q^m}/\mathbb{F}_q} (a_i)^2 = N_{\mathbb{F}_{q^m}/\mathbb{F}_q} (a_j^{q+1}). $$ 
Consider the matrix $ G $ as in (\ref{eq generator for proof triply not msrd}). Since $ N_{\mathbb{F}_{q^m}/\mathbb{F}_q} (a_i^{q+1}) = N_{\mathbb{F}_{q^m}/\mathbb{F}_q} (a_j^{q+1}) $ and $ 2 $ and $ m $ are coprime, there exists $ \beta \in \mathbb{F}_{q^m}^* $ such that $ a_i^{q+1} \beta = a_j^{q+1} \beta^{q^2} $ by Hilbert's Theorem 90 \cite[p. 288, Th. 6]{lang}. Now, there exist invertible matrices $ A_i, A_j \in \mathbb{F}_q^{m \times m} $ such that $ 1 $ is the first component of $ (\beta_1, \beta_2, \ldots, \beta_m) A_i $ and $ \beta $ is the first component of $ (\beta_1, \beta_2, \ldots, \beta_m) A_j $. Let $ A_l = I_m $ for $ l \in [q-1] \setminus \{ i,j \} $. Denoting 
$$ {\rm diag}\left(A_1,A_2, \ldots, A_{q-1} \right) = \left( \begin{array}{cccc}
A_1 & 0 & \ldots & 0 \\
0 & A_2 & \ldots & 0 \\
\vdots & \vdots & \ddots & \vdots \\
0 & 0 & \ldots & A_{q-1}
\end{array} \right) \in \mathbb{F}_q^{n \times n} , $$
we deduce that $ G \cdot {\rm diag}\left(A_1,A_2, \ldots, A_{q-1} \right) $ contains the submatrix
$$ \left( \begin{array}{cc}
1 & \beta \\
a_i^{q+1} & a_j^{q+1} \beta^{q^2}
\end{array} \right), $$
which is not invertible since $ a_i^{q+1} \beta = a_j^{q+1} \beta^{q^2} $. Since multiplying by the invertible block diagonal matrix $ {\rm diag} \left(A_1,A_2, \ldots, A_{q-1} \right) \in \mathbb{F}_q^{n \times n} $ constitutes a linear sum-rank isometry, we deduce that the code generated by $ G $ is not MSRD over $ \mathbb{F}_q $ for the length partition $ (q-1,m) $. Thus the code generated by $ G_e $ is not MSRD with respect to $ {\rm d}_e $ by Corollary \ref{cor multiply singleton}. 
\end{proof} 

\begin{remark}
Notice that Proposition \ref{prop triply extended never msrd} works with the same proof in more general cases, where we consider the sum-rank metric in $ \mathbb{F}_{q^m}^n $, $ n =gr $, for the length partition $ (g,r) $, $ g = (q-1)\mu $, using $ \boldsymbol\beta = (\beta_1, \ldots, \beta_{\mu r}) \in \mathbb{F}_{q^m}^{\mu r} $ satisfying Conditions 1 and 2 in Theorem \ref{th extended moore matrix is msrd sufficient}, under the following assumptions: 1) $ m $ is even and $ \mathbb{F}_{q^2} \subseteq \mathcal{H}_i $ for some $ i \in [\mu] $; or 2) $ q $ and $ m $ are odd and $ \bigcup_{i=1}^\mu \mathcal{H}_i = \mathbb{F}_{q^m} $. Here, we define $ \mathcal{H}_i $, for $ i \in [\mu] $, as in (\ref{eq subspace from h-lin indep}). Since the linearized Reed--Solomon code case is $ \mu = 1 $, both conditions on the (single) subspace $ \mathcal{H}_1 $ hold when $ m $ is even or $ q $ and $ m $ are odd.
\end{remark}

%\begin{remark}
%Proposition \ref{prop triply extended never msrd} can be generalized to the case where we triply extend the code by appending an arbitrary $ 3 \times 3 $ matrix instead of the identity matrix, yielding the generator matrix
%$$ G_e = \left( \begin{array}{ccc|c|ccc|ccc}
%\beta_1 & \ldots & \beta_m & \ldots & \beta_1 & \ldots & \beta_m & a_{1,1} & a_{1,2} & a_{1,3} \\
%a_1 \beta_1^q & \ldots & a_1 \beta_m^q & \ldots & a_{q-1} \beta_1^q & \ldots & a_{q-1} \beta_m^q & a_{2,1} & a_{2,2} & a_{2,3} \\
%a_1^{q+1} \beta_1^{q^2} & \ldots & a_1^{q+1} \beta_m^{q^2} & \ldots & a_{q-1}^{q+1} \beta_1^{q^2} & \ldots & a_{q-1}^{q+1} \beta_m^{q^2} & a_{3,1} & a_{3,2} & a_{3,3}
%\end{array} \right) \in \mathbb{F}_{q^m}^{3 \times (n + 3)}, $$
%where $ a_{i,j} \in \mathbb{F}_{q^m} $, for $ i,j \in \{1,2,3\} $. Here, if $ a_{1,1} = a_{1,2} = a_{1,3} = 0 $, then the extended code $ \mathcal{C}_e $ is clearly not MSRD. However, if say $ a_{1,1} \neq 0 $, 
%\end{remark}

\section{A negative result in the sum-rank metric} \label{sec negative sum-rank}

Up to this point, we have studied extensions of a metric $ {\rm d} $ by adding a Hamming-metric component $ {\rm d}_H $. The reader may wonder if the results in Section \ref{sec multiply extended} also hold if we extend $ {\rm d} $ by adding another metric, for instance, the rank metric. In this section, we give a negative answer to this question by trying to doubly extend MSRD codes as in Theorem \ref{th extended moore matrix is msrd sufficient} (for the largest value of $ \ell $, i.e., $ \ell = q-1 $) by adding a non-trivial rank-metric block and showing that the resulting code is not MSRD even if the conditions in Corollary \ref{cor multiply singleton} hold. 

\begin{proposition} \label{prop doubly ext MSRD not MSRD}
Let $ a_1, a_2, \ldots, a_{q-1} \in \mathbb{F}_{q^m}^* $ be such that $ N_{\mathbb{F}_{q^m}/\mathbb{F}_q}(a_i) \neq N_{\mathbb{F}_{q^m}/\mathbb{F}_q}(a_j) $ if $ i \neq j $. Let $ \boldsymbol\beta = (\beta_1, \beta_2, \ldots, \beta_{\mu r}) \in \mathbb{F}_{q^m}^{\mu r} $ and $ \mathcal{H}_i = \langle \beta_{(i-1)r+1} , \ldots, \beta_{ir} \rangle_{\mathbb{F}_q} \subseteq \mathbb{F}_{q^m} $, for $ i \in [\mu] $, satisfy Conditions 1 and 2 in Theorem \ref{th extended moore matrix is msrd sufficient}. Consider the extended sum-rank metric
$$ {\rm d}_e( (\mathbf{c}, c_{n+1}, c_{n+2}), (\mathbf{d}, d_{n+1}, d_{n+2})) = {\rm d}_{SR}(\mathbf{c},\mathbf{d}) + {\rm d}_R((c_{n+1}, c_{n+2}), (d_{n+1}, d_{n+2})), $$
for $ \mathbf{c} , \mathbf{d} \in \mathbb{F}_{q^m}^n $ and $ c_{n+1}, c_{n+2}, d_{n+1}, d_{n+2} \in \mathbb{F}_{q^m} $, where $ {\rm d}_{SR} $ denotes the sum-rank metric in $ \mathbb{F}_{q^m}^n $ over $ \mathbb{F}_q $ for the length partition $ (g,r) $, where $ g = (q-1)\mu $ and $ n = gr $. Let $ a,b,c,d \in \mathbb{F}_{q^m} $ with $ (0,0) \notin \{ (a,c), (b,d), (a,b), (c,d) \} $. Then the extended $ 2 $-dimensional code $ \mathcal{C}_e $ with generator matrix 
$$ G_e = \left( \begin{array}{ccc|ccc|c|ccc|cc}
\beta_1 & \ldots & \beta_{\mu r} & \beta_1 & \ldots & \beta_{\mu r} & \ldots & \beta_1 & \ldots & \beta_{\mu r} & a & c \\
a_1 \beta_1^q & \ldots & a_1 \beta_{\mu r}^q & a_2 \beta_1^q & \ldots & a_2 \beta_{\mu r}^q & \ldots & a_{q-1} \beta_1^q & \ldots & a_{q-1} \beta_{\mu r}^q & b & d
\end{array} \right) $$
is MSRD for $ {\rm d}_e $ if, and only if, 
$$ - \tau^{-1} \notin \bigcup_{i=1}^{q-1} \left\lbrace a_i \beta^{q-1} \left| \beta \in \bigcup_{j=1}^\mu \mathcal{H}_j \setminus \{ 0 \} \right. \right\rbrace , $$
for every $ \tau \in \mathbb{F}_{q^m}^* $ such that $ a + \tau b $ and $ c + \tau d $ are $ \mathbb{F}_q $-linearly dependent. In particular, if $ \bigcup_{j=1}^\mu \mathcal{H}_j = \mathbb{F}_{q^m} $, then $ \mathcal{C}_e $ is not MSRD for all $ a,b,c,d \in \mathbb{F}_{q^m} $.
\end{proposition}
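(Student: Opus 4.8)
The plan is to compute the $ {\rm d}_e $-weight of an arbitrary nonzero codeword of $ \mathcal{C}_e $ explicitly, as a sum-rank contribution from the first $ n $ coordinates plus a rank contribution from the last two, and then determine exactly when every such weight reaches the Singleton value $ (n+2) - 2 + 1 = n+1 $. Since $ {\rm wt}_e $ is invariant under scaling by $ \mathbb{F}_{q^m}^* $, it is enough to treat the codewords $ \mathbf{v}_{1,0} = (1,0) G_e $, $ \mathbf{v}_{0,1} = (0,1) G_e $ and $ \mathbf{v}_\tau = (1,\tau) G_e $ for $ \tau \in \mathbb{F}_{q^m}^* $. The last two coordinates of $ \mathbf{v}_\tau $ are $ (a + \tau b, c + \tau d) $, of rank weight $ \rho(\tau) := \dim_{\mathbb{F}_q} \langle a + \tau b, c + \tau d \rangle_{\mathbb{F}_q} $, and $ \rho(\tau) \leq 1 $ holds precisely when $ a + \tau b $ and $ c + \tau d $ are $ \mathbb{F}_q $-linearly dependent; the last two coordinates of $ \mathbf{v}_{1,0} $ and $ \mathbf{v}_{0,1} $ are $ (a,c) $ and $ (b,d) $, which are nonzero by hypothesis and hence have rank weight at least $ 1 $.

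For the sum-rank part I would group the first $ n $ coordinates according to the length partition $ (g,r) $: the block of $ \mathbf{v}_\tau $ indexed by $ (i,l) $ with $ i \in [q-1] $ and $ l \in [\mu] $ carries the entries $ \{ \beta + \tau a_i \beta^q \mid \beta \in \mathcal{H}_l \} $ written along the chosen generators of $ \mathcal{H}_l $ (using $ \dim_{\mathbb{F}_q} \mathcal{H}_l = r $, which is Condition 1 of Theorem \ref{th extended moore matrix is msrd sufficient}). As $ \varphi_{i,\tau}(\beta) := \beta + \tau a_i \beta^q $ is an $ \mathbb{F}_q $-linear (linearized) map, this block has rank weight $ r - \dim_{\mathbb{F}_q} (\mathcal{H}_l \cap \ker \varphi_{i,\tau}) $. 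The crucial computation is that $ \ker \varphi_{i,\tau} = \{0\} \cup \{ \beta \mid \beta^{q-1} = -(\tau a_i)^{-1} \} $, combined with the elementary fact that for any $ e \in \mathbb{F}_{q^m} $ the set $ \{0\} \cup \{ \beta \mid \beta^{q-1} = e \} $ is either $ \{0\} $ or a one-dimensional $ \mathbb{F}_q $-subspace. Hence $ \dim_{\mathbb{F}_q}(\mathcal{H}_l \cap \ker \varphi_{i,\tau}) \in \{0,1\} $, and it equals $ 1 $ if and only if $ -\tau^{-1} \in a_i \{ \beta^{q-1} \mid \beta \in \mathcal{H}_l \setminus \{0\} \} $. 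Summing over all $ (i,l) $, the sum-rank weight of the first $ n $ coordinates of $ \mathbf{v}_\tau $ equals $ n - D(\tau) $, where $ D(\tau) := \sum_{i,l} \dim_{\mathbb{F}_q}(\mathcal{H}_l \cap \ker \varphi_{i,\tau}) $, and $ D(\tau) \geq 1 $ exactly when $ -\tau^{-1} \in \bigcup_{i=1}^{q-1} a_i \{ \beta^{q-1} \mid \beta \in \bigcup_{j=1}^\mu \mathcal{H}_j \setminus \{0\} \} $. Moreover $ D(\tau) \leq 1 $: the submatrix $ G $ of $ G_e $ consisting of its first two rows and first $ n $ columns is an extended Moore matrix $ M_2(\mathbf{a}, \boldsymbol\beta) $ (Definition \ref{def moore matrix extended}) satisfying Conditions 1 and 2, so by Theorem \ref{th extended moore matrix is msrd sufficient} it generates an MSRD code of minimum sum-rank distance $ n-1 $, forcing $ n - D(\tau) \geq n-1 $. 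For $ \mathbf{v}_{1,0} $ and $ \mathbf{v}_{0,1} $ the corresponding maps are $ \beta \mapsto \beta $ and $ \beta \mapsto a_i \beta^q $, both bijective, so the sum-rank weight is the full $ n $.

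Putting the pieces together, $ \mathbf{v}_{1,0} $ and $ \mathbf{v}_{0,1} $ always have $ {\rm d}_e $-weight at least $ n+1 $, while $ {\rm wt}_e(\mathbf{v}_\tau) = (n - D(\tau)) + \rho(\tau) $ with $ 0 \leq D(\tau) \leq 1 $, so $ {\rm wt}_e(\mathbf{v}_\tau) \geq n+1 $ iff $ \rho(\tau) \geq D(\tau) + 1 $. Ranging over $ \tau \in \mathbb{F}_{q^m}^* $ and setting aside the trivial sub-case where $ a + \tau b $ and $ c + \tau d $ both vanish (possible only if $ ad = bc $ with $ a,b,c,d $ all nonzero, in which case $ \mathcal{C}_e $ is immediately not MSRD), this says: for every $ \tau \in \mathbb{F}_{q^m}^* $ for which $ a + \tau b $ and $ c + \tau d $ are $ \mathbb{F}_q $-linearly dependent we must have $ D(\tau) = 0 $, i.e. $ -\tau^{-1} \notin \bigcup_i a_i \{ \beta^{q-1} \mid \beta \in \bigcup_j \mathcal{H}_j \setminus \{0\} \} $. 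This is the displayed characterization.

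For the final ``in particular'', assume $ \bigcup_{j} \mathcal{H}_j = \mathbb{F}_{q^m} $. Then $ \{ \beta^{q-1} \mid \beta \in \bigcup_j \mathcal{H}_j \setminus \{0\} \} $ is the subgroup $ (\mathbb{F}_{q^m}^*)^{q-1} $ of $ (q-1) $-st powers, which equals $ \ker N_{\mathbb{F}_{q^m}/\mathbb{F}_q} $ (by Hilbert's Theorem 90, or directly from $ q-1 \mid q^m-1 $ and a cardinality count). As $ N_{\mathbb{F}_{q^m}/\mathbb{F}_q}(a_1), \dots, N_{\mathbb{F}_{q^m}/\mathbb{F}_q}(a_{q-1}) $ are $ q-1 $ distinct elements of $ \mathbb{F}_q^* $, the cosets $ a_i \ker N_{\mathbb{F}_{q^m}/\mathbb{F}_q} $ for $ i \in [q-1] $ exhaust $ \mathbb{F}_{q^m}^* $, so $ -\tau^{-1} $ lies in $ \bigcup_i a_i \{ \beta^{q-1} \mid \beta \in \bigcup_j \mathcal{H}_j \setminus \{0\} \} $ for every $ \tau \in \mathbb{F}_{q^m}^* $; thus the condition above would force there to be no $ \tau \in \mathbb{F}_{q^m}^* $ making $ a + \tau b $ and $ c + \tau d $ $ \mathbb{F}_q $-linearly dependent. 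But such a $ \tau $ always exists: the assignment $ [1:\tau] \mapsto [a + \tau b : c + \tau d] $ (with $ [0:1] \mapsto [b:d] $) is the self-map of $ \mathbb{P}^1(\mathbb{F}_{q^m}) $ attached to the matrix with rows $ (a,c),(b,d) $, whose rows and columns are nonzero by hypothesis; when this matrix is invertible the map is a bijection, so at least $ q+1-2 \geq 1 $ values of $ \tau \in \mathbb{F}_{q^m}^* $ get sent into $ \mathbb{P}^1(\mathbb{F}_q) $, and when it has rank one the pair $ (a+\tau b, c+\tau d) $ is a scalar multiple of a fixed nonzero vector and such a $ \tau $ is produced directly. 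Hence, by the characterization just established, $ \mathcal{C}_e $ is not MSRD. I expect the bulk of the work to sit in the second step --- reading off the linearized map $ \varphi_{i,\tau} $, computing $ \dim_{\mathbb{F}_q}(\mathcal{H}_l \cap \ker \varphi_{i,\tau}) $ via the fact that the fibres of $ \beta \mapsto \beta^{q-1} $ are $ \mathbb{F}_q $-lines, and invoking the MSRD-ness of $ M_2(\mathbf{a},\boldsymbol\beta) $ to pin down $ D(\tau) \leq 1 $ --- while in the last step the analogous key point is the norm computation identifying the cosets of $ (\mathbb{F}_{q^m}^*)^{q-1} $ with the fibres of $ N_{\mathbb{F}_{q^m}/\mathbb{F}_q} $; the rest is bookkeeping.
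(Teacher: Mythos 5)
Your argument is correct and follows essentially the same route as the paper's proof: restrict to the codewords $\mathbf{g}_1+\tau\mathbf{g}_2$, translate the drop in sum-rank weight of the first $n$ coordinates into the solvability of $-\tau^{-1}=a_i\beta^{q-1}$ with $\beta\in\mathcal{H}_j\setminus\{0\}$, and settle the ``in particular'' claim via the norm map and Hilbert's Theorem 90. You are in fact slightly more careful than the paper in checking the basis codewords $(1,0)G_e$, $(0,1)G_e$ and in flagging the degenerate sub-case $a+\tau b=c+\tau d=0$, which the paper's proof silently absorbs into the linearly dependent case.
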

\begin{proof}
First of all, the reader may verify that there exists $ \tau \in \mathbb{F}_{q^m}^* $ such that $ a + \tau b $ and $ c + \tau d $ are $ \mathbb{F}_q $-linearly dependent, since $ (0,0) \notin \{ (a,c), (b,d), (a,b), (c,d) \} $. 

Let $ \mathbf{g}_1, \mathbf{g}_2 \in \mathbb{F}_{q^m}^n $ be the first and second rows of $ G_e $, respectively, projected on the first $ n $ coordinates. If $ \tau \in \mathbb{F}_{q^m}^* $ is such that $ a + \tau b $ and $ c + \tau d $ are $ \mathbb{F}_q $-linearly independent, then we have that 
$$ {\rm wt}_e (\mathbf{g}_1 + \tau \mathbf{g}_2, a + \tau b, c + \tau d) \geq n+1. $$
Therefore $ \mathcal{C}_e $ is not MSRD if, and only if, $ {\rm wt}_{SR}(\mathbf{g}_1 + \tau \mathbf{g}_2) = n-1 $, for some $ \tau \in \mathbb{F}_{q^m}^* $ such that $ a + \tau b $ and $ c + \tau d $ are $ \mathbb{F}_q $-linearly dependent. Fix one such $ \tau $. We have $ {\rm wt}_{SR}(\mathbf{g}_1 + \tau \mathbf{g}_2) = n-1 $ if, and only if, there exist $ \lambda_1, \ldots, \lambda_r \in \mathbb{F}_q $, not all zero, such that
$$ \sum_{k=1}^r \lambda_k \beta_{(j-1)r+k} + \tau a_i \sum_{k=1}^r \lambda_k \beta_{(j-1)r+k}^q = 0, $$
for some $ j \in [\mu] $ and some $ i \in [q-1] $. Let $ \beta = \sum_{k=1}^r \lambda_k \beta_{(j-1)r+k} \in \mathcal{H}_j \setminus \{ 0 \} $. Then the equation above is simply $ -\tau^{-1} = a_i \beta^{q-1} $. This is possible for some $ i \in [q-1] $ and some $ \beta \in \mathcal{H}_j \setminus \{ 0 \} $ if, and only if, 
$$ -\tau^{-1} \in \bigcup_{i=1}^{q-1} \left\lbrace a_i \beta^{q-1} \left| \beta \in \bigcup_{j=1}^\mu \mathcal{H}_j \setminus \{0 \} \right. \right\rbrace , $$
and we are done.

Finally, assume that $ \bigcup_{j=1}^\mu \mathcal{H}_j = \mathbb{F}_{q^m} $. For $ \tau \in \mathbb{F}_{q^m}^* $, there exists $ i \in [q-1] $ such that $ N_{\mathbb{F}_{q^m}/\mathbb{F}_q}(-\tau^{-1}) $ $ = N_{\mathbb{F}_{q^m}/\mathbb{F}_q}(a_i) $. By Hilbert's Theorem 90, there exists $ \beta \in \mathbb{F}_{q^m}^* = \bigcup_{j=1}^\mu \mathcal{H}_j \setminus \{ 0 \} $ such that $ -\tau^{-1} = a_i \beta^{q-1} $ and we conclude that $ \mathcal{C}_e $ is not MSRD when $ \bigcup_{j=1}^\mu \mathcal{H}_j = \mathbb{F}_{q^m} $.
\end{proof}
%
%As a particular case, we obtain the following criterion when $ a = d = 1 $ and $ b = c = 0 $.
%
%\begin{corollary} \label{cor doubly ext MSRD not MSRD}
%Let the notation and assumptions be as in Proposition \ref{prop doubly ext MSRD not MSRD}, and further assume that $ a = d = 1 $ and $ b = c = 0 $. That is,
%$$ G_e = \left( \begin{array}{ccc|ccc|c|ccc|cc}
%\beta_1 & \ldots & \beta_{\mu r} & \beta_1 & \ldots & \beta_{\mu r} & \ldots & \beta_1 & \ldots & \beta_{\mu r} & 1 & 0 \\
%a_1 \beta_1^q & \ldots & a_1 \beta_{\mu r}^q & a_2 \beta_1^q & \ldots & a_2 \beta_{\mu r}^q & \ldots & a_{q-1} \beta_1^q & \ldots & a_{q-1} \beta_{\mu r}^q & 0 & 1
%\end{array} \right). $$
%Then $ \mathcal{C}_e $ is MSRD for $ {\rm d}_e $ if, and only if, 
%$$ \mathbb{F}_q^* \cap \left( \bigcup_{i=1}^{q-1} \left\lbrace a_i \beta^{q-1} \left| \beta \in \bigcup_{j=1}^\mu \mathcal{H}_j \setminus \{ 0 \} \right. \right\rbrace \right) = \varnothing. $$
%\end{corollary}

In the case where $ (\beta_1,\beta_2, \ldots, \beta_{\mu r}) $ is constructed using field reduction (as in the following lemma, see also \cite[Sec. 4.1]{generalMSRD}), we have the following easy criterion to determine when $ \bigcup_{j=1}^\mu \mathcal{H}_j = \mathbb{F}_{q^m} $.

\begin{lemma} \label{lemma union condition for field reduction}
Let $ m = r \rho $, for positive integers $ r $ and $ \rho $, and let $ (\beta_{(j-1)r+1} , \ldots, \beta_{jr}) = \gamma_j (\alpha_1, \ldots, \alpha_r) $, for $ j \in [\mu] $, where $ \alpha_1, \ldots, \alpha_r \in \mathbb{F}_{q^r} $ are $ \mathbb{F}_q $-linearly independent, and $ \gamma_1, \ldots, \gamma_\mu \in \mathbb{F}_{q^m}^* $ are such that $ \gamma_i $ and $ \gamma_j $ are $ \mathbb{F}_{q^r} $-linearly independent if $ i \neq j $. Define $ \mathcal{H}_j = \langle \beta_{(j-1)r+1} , \ldots, \beta_{jr} \rangle_{\mathbb{F}_q} \subseteq \mathbb{F}_{q^m} $, for $ j \in [\mu] $. In this setting, we have $ \bigcup_{j=1}^\mu \mathcal{H}_j = \mathbb{F}_{q^m} $ if, and only if, $ \mu = (q^m-1) / (q^r-1) $.
\end{lemma}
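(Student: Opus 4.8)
The plan is to recognize each $\mathcal{H}_j$ as a one-dimensional $\mathbb{F}_{q^r}$-subspace of $\mathbb{F}_{q^m}$ and then conclude by a counting argument.

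First I would observe that $\alpha_1, \ldots, \alpha_r$ are $r$ elements of $\mathbb{F}_{q^r}$ that are $\mathbb{F}_q$-linearly independent, and $\dim_{\mathbb{F}_q} \mathbb{F}_{q^r} = r$, so they form an $\mathbb{F}_q$-basis of $\mathbb{F}_{q^r}$; in particular $\langle \alpha_1, \ldots, \alpha_r \rangle_{\mathbb{F}_q} = \mathbb{F}_{q^r}$. Since multiplication by $\gamma_j \in \mathbb{F}_{q^m}^*$ is an $\mathbb{F}_q$-linear bijection of $\mathbb{F}_{q^m}$ and $\beta_{(j-1)r+k} = \gamma_j \alpha_k$, it follows that $ \mathcal{H}_j = \gamma_j \langle \alpha_1, \ldots, \alpha_r \rangle_{\mathbb{F}_q} = \gamma_j \cdot \mathbb{F}_{q^r} = \{ \gamma_j x \mid x \in \mathbb{F}_{q^r} \} $. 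Because $r \mid m$, the field $\mathbb{F}_{q^r}$ is a subfield of $\mathbb{F}_{q^m}$, and $\mathbb{F}_{q^m}$ is an $\mathbb{F}_{q^r}$-vector space of dimension $\rho = m/r$; each $\mathcal{H}_j = \gamma_j \mathbb{F}_{q^r}$ is then a one-dimensional $\mathbb{F}_{q^r}$-subspace (a line through the origin), so $|\mathcal{H}_j| = q^r$.

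Next I would use the hypothesis that $\gamma_i$ and $\gamma_j$ are $\mathbb{F}_{q^r}$-linearly independent for $i \neq j$: since $\gamma_i \mathbb{F}_{q^r} = \gamma_j \mathbb{F}_{q^r}$ is equivalent to $\gamma_i / \gamma_j \in \mathbb{F}_{q^r}$, this hypothesis says exactly that the lines $\mathcal{H}_1, \ldots, \mathcal{H}_\mu$ are pairwise distinct. Two distinct one-dimensional subspaces of a vector space meet only in $\{0\}$, so the sets $\mathcal{H}_j \setminus \{0\}$, for $j \in [\mu]$, are pairwise disjoint, each of size $q^r - 1$, and therefore
\[
\left| \bigcup_{j=1}^\mu \mathcal{H}_j \right| = 1 + \mu (q^r - 1).
\]
(In passing, this already forces $\mu \leq (q^m-1)/(q^r-1)$, the total number of such lines.)

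Finally, since $\bigcup_{j=1}^\mu \mathcal{H}_j \subseteq \mathbb{F}_{q^m}$ and $|\mathbb{F}_{q^m}| = q^m$, the union equals $\mathbb{F}_{q^m}$ if and only if $1 + \mu(q^r - 1) = q^m$, i.e., $\mu = (q^m - 1)/(q^r - 1)$, which is indeed a positive integer because $q^r - 1 \mid q^m - 1$ when $r \mid m$. I do not expect a genuine obstacle here; the entire content of the lemma is the opening observation that field reduction turns each $\mathcal{H}_j$ into an $\mathbb{F}_{q^r}$-line, after which the statement is the trivial fact that a family of distinct lines covers $\mathbb{F}_{q^r}^\rho$ precisely when it consists of all $(q^m-1)/(q^r-1)$ of them.
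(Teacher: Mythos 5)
Your proof is correct and follows essentially the same route as the paper: identify each $\mathcal{H}_j$ as the $\mathbb{F}_{q^r}$-line $\gamma_j\mathbb{F}_{q^r}$, note that the linear-independence hypothesis makes these lines pairwise distinct, and conclude that they cover $\mathbb{F}_{q^m}$ exactly when all $(q^m-1)/(q^r-1)$ of them appear. The only cosmetic difference is that the paper phrases the last step in terms of the projective space $\mathbb{P}_{\mathbb{F}_{q^r}}(\mathbb{F}_{q^m})$, while you count elements of the union directly; both are equivalent.
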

\begin{proof}
In this case, the condition $ \bigcup_{j=1}^\mu \mathcal{H}_j = \mathbb{F}_{q^m} $ holds if, and only if, $ \{ [\gamma_1], \ldots, [\gamma_\mu] \} = \mathbb{P}_{\mathbb{F}_{q^r}}(\mathbb{F}_{q^m}) $, where $ [\gamma] = \{ \lambda \gamma \mid \lambda \in \mathbb{F}_{q^r}^* \} $ is the projective point associated to $ \gamma \in \mathbb{F}_{q^m}^* $ over $ \mathbb{F}_{q^r} $. Now since $ \gamma_i $ and $ \gamma_j $ are $ \mathbb{F}_{q^r} $-linearly independent if $ i \neq j $, then $ [\gamma_1], \ldots, [\gamma_\mu] $ are distinct projective points. Therefore they form the whole projective space if, and only if, there are $ (q^m-1) / (q^r-1) $ of them.
\end{proof}

This implies that Proposition \ref{prop doubly ext MSRD not MSRD} holds for $ 2 $-dimensional (full-length) linearized Reed--Solomon codes (the case $ r = m $ and $ \mu = \rho = 1 $, see \cite[Sec. 4.2]{generalMSRD}) and the more general family of MSRD codes obtained from Hamming codes given in \cite[Sec. 4.4]{generalMSRD}, which are the longest known $ 2 $-dimensional linear MSRD codes. In other words, those two families of $ 2 $-dimensional MSRD codes may not be doubly extended as in Proposition \ref{prop doubly ext MSRD not MSRD}. In the case $ r = 2 $, it was known that the latter family could not be doubly extended as in Proposition \ref{prop doubly ext MSRD not MSRD} since their number of blocks (the parameter $ g = (q-1) \mu $) attains the upper bound from \cite[Th. 6.12]{alberto-fundamental} since $ g = (q-1)(q^m-1) / (q^r-1) - 1 $ in this case. The fact that it may not be doubly extended for $ r \geq 3 $ is new.

\section{One-weight codes} \label{sec one-weight}

In this section, we give necessary and sufficient conditions for the doubly extended MSRD codes from Corollary \ref{cor doubly extended msrd codes} to be one-weight codes (or constant-weight codes), that is, such that all of their codewords have the same weight (thus equal to the minimum distance of the code). The next proposition recovers \cite[Th. 4.9]{neri-oneweight} for linearized Reed--Solomon codes by taking $ \mu = 1 $.

\begin{proposition} \label{prop one-weight characterization}
Let $ a_1, a_2, \ldots, a_{q-1} \in \mathbb{F}_{q^m}^* $ be such that $ N_{\mathbb{F}_{q^m}/\mathbb{F}_q}(a_i) \neq N_{\mathbb{F}_{q^m}/\mathbb{F}_q}(a_j) $ if $ i \neq j $. Let $ \boldsymbol\beta = (\beta_1, \beta_2, \ldots, \beta_{\mu r}) \in \mathbb{F}_{q^m}^{\mu r} $ and $ \mathcal{H}_i = \langle \beta_{(i-1)r+1} , \ldots, \beta_{ir} \rangle_{\mathbb{F}_q} \subseteq \mathbb{F}_{q^m} $, for $ i \in [\mu] $, satisfy Conditions 1 and 2 in Theorem \ref{th extended moore matrix is msrd sufficient}. Consider the extended sum-rank metric
$$ {\rm d}_e( (\mathbf{c}, c_{n+1}, c_{n+2}), (\mathbf{d}, d_{n+1}, d_{n+2})) = {\rm d}_{SR}(\mathbf{c},\mathbf{d}) + {\rm d}_H((c_{n+1}, c_{n+2}), (d_{n+1}, d_{n+2})), $$
for $ \mathbf{c} , \mathbf{d} \in \mathbb{F}_{q^m}^n $ and $ c_{n+1}, c_{n+2}, d_{n+1}, d_{n+2} \in \mathbb{F}_{q^m} $, where $ {\rm d}_{SR} $ denotes the sum-rank metric in $ \mathbb{F}_{q^m}^n $ over $ \mathbb{F}_q $ for the length partition $ (g,r) $, where $ g = (q-1) \mu $ and $ n = gr $. Then the extended $ 2 $-dimensional MSRD code $ \mathcal{C}_e $ with generator matrix 
$$ G_e = \left( \begin{array}{ccc|ccc|c|ccc|cc}
\beta_1 & \ldots & \beta_{\mu r} & \beta_1 & \ldots & \beta_{\mu r} & \ldots & \beta_1 & \ldots & \beta_{\mu r} & 1 & 0 \\
a_1 \beta_1^q & \ldots & a_1 \beta_{\mu r}^q & a_2 \beta_1^q & \ldots & a_2 \beta_{\mu r}^q & \ldots & a_{q-1} \beta_1^q & \ldots & a_{q-1} \beta_{\mu r}^q & 0 & 1
\end{array} \right) $$%\in \mathbb{F}_{q^m}^{2 \times (\mu r (q-1) + 2)} $$
is a one-weight code for $ {\rm d}_e $ if, and only if, $ \bigcup_{i=1}^\mu \mathcal{H}_i = \mathbb{F}_{q^m} $.
\end{proposition}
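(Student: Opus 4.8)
The plan is to analyze the weights of codewords of $\mathcal{C}_e$ directly and compare them. Since $\mathcal{C}_e$ is a $2$-dimensional code, up to scaling there are essentially two types of nonzero codewords: those coming from the first row alone (and scalar multiples), those coming from the second row alone, and those of the form $\mathbf{g}_1 + \tau \mathbf{g}_2$ for $\tau \in \mathbb{F}_{q^m}^*$, where $\mathbf{g}_1, \mathbf{g}_2$ denote the two rows of $G_e$. The code is MSRD by Corollary \ref{cor doubly extended msrd codes}, and $n = gr$, $k = 2$, so the minimum distance for $\mathrm{d}_e$ equals $(n+2) - 2 + 1 = n+1$. Hence being a one-weight code is equivalent to \emph{every} nonzero codeword having $\mathrm{d}_e$-weight exactly $n+1$.

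First I would compute the weights of the ``pure'' codewords. The first row $\mathbf{g}_1$ has last two coordinates $(1,0)$, contributing Hamming weight $1$; its sum-rank part is $(\boldsymbol\beta,\ldots,\boldsymbol\beta)$, which in each of the $g$ blocks has rank weight $\dim_{\mathbb{F}_q}\mathcal{H}_i = r$ (by Condition 1), so the sum-rank weight is $gr = n$ and $\mathrm{wt}_e(\mathbf{g}_1) = n+1$. Similarly the second row, using Lemma \ref{lemma for doubly extended lin RS codes} (the map $\phi$ is a sum-rank isometry up to the diagonal scaling by the $a_i$), has sum-rank weight $n$ and Hamming part $(0,1)$, giving $\mathrm{wt}_e(\mathbf{g}_2) = n+1$. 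So the only codewords whose weight could drop below $n+1$ are the combinations $\mathbf{g}_1 + \tau\mathbf{g}_2$ with $\tau \neq 0$; for these the last two coordinates are $(1,\tau)$, always of Hamming weight $2$, so $\mathrm{wt}_e(\mathbf{g}_1 + \tau\mathbf{g}_2) = \mathrm{wt}_{SR}(\mathbf{h}_1 + \tau\mathbf{h}_2) + 2$, where $\mathbf{h}_1, \mathbf{h}_2$ are the restrictions of $\mathbf{g}_1,\mathbf{g}_2$ to the first $n$ coordinates. By the MSRD property this sum-rank weight is at least $n-1$, so the weight is at least $n+1$; it equals $n+1$ exactly when $\mathrm{wt}_{SR}(\mathbf{h}_1 + \tau\mathbf{h}_2) = n-1$, and it exceeds $n+1$ (making the code \emph{not} one-weight, since $\mathrm{wt}_e(\mathbf{g}_1) = n+1$) exactly when the sum-rank weight is $n$. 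So the code is one-weight iff $\mathrm{wt}_{SR}(\mathbf{h}_1 + \tau\mathbf{h}_2) = n-1$ for \emph{every} $\tau \in \mathbb{F}_{q^m}^*$.

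Next I would translate that last condition into the union condition, reusing the computation in the proof of Proposition \ref{prop doubly ext MSRD not MSRD}. In block $j$ of block-index $i$, the codeword $\mathbf{h}_1 + \tau\mathbf{h}_2$ has entries $\beta_{(j-1)r+k} + \tau a_i \beta_{(j-1)r+k}^q$; its $\mathbb{F}_q$-rank in that block drops below $r$ iff some nonzero $\beta = \sum_k \lambda_k \beta_{(j-1)r+k} \in \mathcal{H}_j\setminus\{0\}$ satisfies $\beta + \tau a_i \beta^q = 0$, i.e.\ $-\tau^{-1} = a_i\beta^{q-1}$. So $\mathrm{wt}_{SR}(\mathbf{h}_1 + \tau\mathbf{h}_2) = n-1$ for all $\tau$ iff for every $\tau \in \mathbb{F}_{q^m}^*$ there is a pair $(i,\beta)$, $i \in [q-1]$, $\beta \in \bigcup_{j=1}^\mu \mathcal{H}_j\setminus\{0\}$, with $-\tau^{-1} = a_i\beta^{q-1}$ (and one checks the weight can never drop by two or more, since a drop by one in a second block would require the relevant pair of blocks to conflict for the same $\tau$, but each block-index $i$ yields its own equation). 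Running over all $\tau$, $-\tau^{-1}$ runs over all of $\mathbb{F}_{q^m}^*$, so the condition is $\bigcup_{i=1}^{q-1}\{a_i\beta^{q-1} : \beta \in \bigcup_j\mathcal{H}_j\setminus\{0\}\} = \mathbb{F}_{q^m}^*$.

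Finally I would show this equality of sets holds iff $\bigcup_{j=1}^\mu \mathcal{H}_j = \mathbb{F}_{q^m}$. The ``if'' direction is exactly the Hilbert 90 argument at the end of the proof of Proposition \ref{prop doubly ext MSRD not MSRD}: given $x \in \mathbb{F}_{q^m}^*$, pick $i$ with $N_{\mathbb{F}_{q^m}/\mathbb{F}_q}(x) = N_{\mathbb{F}_{q^m}/\mathbb{F}_q}(a_i)$, then $x/a_i$ has norm $1$, so $x/a_i = \beta^{q-1}$ for some $\beta \in \mathbb{F}_{q^m}^* = \bigcup_j\mathcal{H}_j\setminus\{0\}$. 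For ``only if'', I would count: the set $\{a_i\beta^{q-1}\}$ for fixed $i$ is contained in the coset of the norm-$1$ subgroup determined by $N(a_i)$, and as $\beta$ ranges over $\mathcal{H}_j\setminus\{0\}$ the values $\beta^{q-1}$ are a subset of the norm-$1$ subgroup of size at most $\#(\mathcal{H}_j\setminus\{0\})/(q-1)$ worth of contribution; summing the cardinality bound $\bigl|\bigcup_{i,j}\{a_i\beta^{q-1}\}\bigr| \le (q-1)\cdot\bigl|\{\beta^{q-1} : \beta \in \bigcup_j\mathcal{H}_j\setminus\{0\}\}\bigr|$ and noting $|\{\beta^{q-1}\}| \le \frac{1}{q-1}|\bigcup_j\mathcal{H}_j\setminus\{0\}|$ gives $|\bigcup_{i,j}\{a_i\beta^{q-1}\}| \le |\bigcup_j\mathcal{H}_j| - 1 \le q^m - 1$, with equality in the first place forcing $\bigcup_j\mathcal{H}_j = \mathbb{F}_{q^m}$. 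The main obstacle is getting this counting inequality tight and clean — in particular verifying that the $q-1$ cosets indexed by the distinct norms $N(a_i)$ are disjoint and that the map $\beta \mapsto \beta^{q-1}$ is exactly $(q-1)$-to-$1$ onto the norm-$1$ subgroup, so that no overcounting inflates the bound; once that bookkeeping is pinned down the equivalence drops out.
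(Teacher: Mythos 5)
Your proof is correct and follows essentially the same route as the paper's: reduce one-weightness to the condition $ \mathbb{F}_{q^m}^* = \bigcup_{i=1}^{q-1} \{ a_i \beta^{q-1} \mid \beta \in \bigcup_{j=1}^\mu \mathcal{H}_j \setminus \{0\} \} $ and then identify this with $ \bigcup_{j=1}^\mu \mathcal{H}_j = \mathbb{F}_{q^m} $, the only difference being that the paper dispatches the last equivalence with ``it is easy to see'' from the fact that $ \beta^{q-1} = \gamma^{q-1} $ iff $ \beta/\gamma \in \mathbb{F}_q^* $, while you spell it out via Hilbert 90 and counting. The bookkeeping you flag as the remaining obstacle does close cleanly: each $ \mathcal{H}_j \setminus \{0\} $ is invariant under multiplication by $ \mathbb{F}_q^* $, so the exactly $(q-1)$-to-one map $ \beta \mapsto \beta^{q-1} $ sends $ \bigcup_j \mathcal{H}_j \setminus \{0\} $ onto a subset of the norm-one subgroup of size exactly $ |\bigcup_j \mathcal{H}_j \setminus \{0\}|/(q-1) $, and the $ q-1 $ cosets $ a_i \ker N_{\mathbb{F}_{q^m}/\mathbb{F}_q} $ are pairwise disjoint since the norms $ N_{\mathbb{F}_{q^m}/\mathbb{F}_q}(a_i) $ are distinct and exhaust $ \mathbb{F}_q^* $.
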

\begin{proof}
Let $ \mathbf{g}_1, \mathbf{g}_2 \in \mathbb{F}_{q^m}^{n+2} $ be the first and second rows of $ G_e $, respectively. Since $ {\rm d}_e(\mathcal{C}_e) = n+1 $, we need to show that $ {\rm wt}_e(\mathbf{g}_1 + \lambda \mathbf{g}_2) = n+1 $, for all $ \lambda \in \mathbb{F}_{q^m}^* $. Fix $ \lambda \in \mathbb{F}_{q^m}^* $. We need to show that there exist $ \lambda_1, \lambda_2, \ldots, \lambda_r \in \mathbb{F}_q $, not all zero, such that
$$ \sum_{k=1}^r \lambda_k \beta_{(j-1)r+k} + \lambda a_i \sum_{k=1}^r \lambda_k \beta_{(j-1)r+k}^q = 0, $$
for some $ j \in [\mu] $ and some $ i \in [q-1] $. Let $ \beta = \sum_{k=1}^r \lambda_k \beta_{(j-1)r+k} \in \mathcal{H}_j \setminus \{ 0 \} $. Then the equation above is simply $ -\lambda^{-1} = a_i \beta^{q-1} $. This is possible for all $ \lambda \in \mathbb{F}_{q^m}^* $ if, and only if, 
\begin{equation}
\mathbb{F}_{q^m}^* = \bigcup_{i=1}^{q-1} \left\lbrace a_i \beta^{q-1} \left| \beta \in \bigcup_{j=1}^\mu \mathcal{H}_j \setminus \{0 \} \right. \right\rbrace .
\label{eq union condition one-weight proof}
\end{equation}
Since $ \beta^{q-1} = \gamma^{q-1} $ holds for $ \beta,\gamma \in \mathbb{F}_{q^m}^* $ if, and only if, $ \beta/\gamma \in \mathbb{F}_q^* $, it is easy to see that (\ref{eq union condition one-weight proof}) holds if, and only if, $ \bigcup_{i=1}^\mu \mathcal{H}_i = \mathbb{F}_{q^m} $, and we are done.
\end{proof}

In the case where $ \boldsymbol\beta $ is constructed using field reduction as in Lemma \ref{lemma union condition for field reduction}, we see that the extended $ 2 $-dimensional MSRD code $ \mathcal{C}_e $ is a one-weight code for $ {\rm d}_e $ if, and only if, $ \mu = (q^m-1) / (q^r-1) $.

In other words, $ 2 $-dimensional doubly extended linearized Reed--Solomon codes and the doubly extended MSRD codes based on Hamming codes as in \cite[Sec. 4.4]{generalMSRD} are all one-weight codes for the extended metric $ {\rm d}_e $.

Finally, we show that triply extended MSRD codes are never one-weight codes for $ q = 2 $. Due to the results from Section \ref{sec triply extended}, we only consider the case where $ m $ is odd. Notice that in this case the vector $ \mathbf{a} $ is of length one and we may simply consider it as $ \mathbf{a} = (1) $.

\begin{proposition}
Let $ q = 2 $, let $ m \geq 3 $ be odd and set $ n = \mu r $ for positive integers $ \mu $ and $ r $. Let $ \boldsymbol\beta = (\beta_1, \beta_2, \ldots, \beta_{\mu r}) \in \mathbb{F}_{2^m}^{\mu r} $ satisfy Conditions 1 and 2 in Theorem \ref{th extended moore matrix is msrd sufficient}. The triply extended code $ \mathcal{C}_e \subseteq \mathbb{F}_{2^m}^{n + 3} $ with generator matrix
$$ G_e = \left( \begin{array}{cccc|ccc}
\beta_1 & \beta_2 & \ldots & \beta_{\mu r} & 1 & 0 & 0 \\
\beta_1^2 & \beta_2^2 & \ldots & \beta_{\mu r}^2 & 0 & 1 & 0 \\
\beta_1^4 & \beta_2^4 & \ldots & \beta_{\mu r}^4 & 0 & 0 & 1
\end{array} \right) \in \mathbb{F}_{2^m}^{3 \times (n + 3)} $$
is MSRD but not a one-weight code for the extended sum-rank metric
$$ {\rm d}_e( (\mathbf{c}, \mathbf{c}^\prime), (\mathbf{d}, \mathbf{d}^\prime)) = {\rm d}_{SR}(\mathbf{c},\mathbf{d}) + {\rm d}_H(\mathbf{c}^\prime, \mathbf{d}^\prime), $$
for $ \mathbf{c} , \mathbf{d} \in \mathbb{F}_{2^m}^n $ and $ \mathbf{c}^\prime, \mathbf{d}^\prime \in \mathbb{F}_{2^m}^3 $, where $ {\rm d}_{SR} $ denotes the sum-rank metric in $ \mathbb{F}_{2^m}^n $ over $ \mathbb{F}_2 $ for the length partition $ (\mu, r) $.
\end{proposition}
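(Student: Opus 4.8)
The statement has two parts, and I would treat them in turn. For the MSRD claim, I would simply specialize Theorem~\ref{th triply extended LRS} to $q=2$: then $q-1=1$, the vector $\mathbf{a}$ has a single entry, which we may take to be $\mathbf{a}=(1)$ (so the distinctness-of-norms hypothesis is vacuous and $a_1^{q+1}=1$), and the generator matrix displayed in Theorem~\ref{th triply extended LRS} becomes exactly the matrix $G_e$ of the present statement. Hence $\mathcal{C}_e$ is MSRD for $ {\rm d}_e $, and in particular $ {\rm d}_e(\mathcal{C}_e) = (n+3)-3+1 = n+1 $.

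For the one-weight claim, the idea is to exhibit two codewords of different weight. First, writing $\mathbf{g}_1,\mathbf{g}_2,\mathbf{g}_3$ for the rows of $G_e$, the codeword $\mathbf{g}_1 = (\beta_1,\ldots,\beta_{\mu r},1,0,0)$ has first $n$ coordinates of sum-rank weight $\sum_{i=1}^{\mu}\dim_{\mathbb{F}_2}\mathcal{H}_i = \mu r = n$, by Condition~1 of Theorem~\ref{th extended moore matrix is msrd sufficient}, and last three coordinates of Hamming weight $1$; thus $ {\rm wt}_e(\mathbf{g}_1) = n+1 $. Second, for scalars $\lambda_1,\lambda_2,\lambda_3\in\mathbb{F}_{2^m}$, the codeword $\lambda_1\mathbf{g}_1+\lambda_2\mathbf{g}_2+\lambda_3\mathbf{g}_3$ has first $n$ coordinates $(F(\beta_1),\ldots,F(\beta_{\mu r}))$ with $F(x)=\lambda_1 x+\lambda_2 x^2+\lambda_3 x^4$ an $\mathbb{F}_2$-linear endomorphism of $\mathbb{F}_{2^m}$, and last three coordinates $(\lambda_1,\lambda_2,\lambda_3)$, so its weight equals $\sum_{i=1}^{\mu}\dim_{\mathbb{F}_2}F(\mathcal{H}_i) + {\rm wt}_H(\lambda_1,\lambda_2,\lambda_3)$. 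If I can choose $\lambda_1,\lambda_2,\lambda_3$ all nonzero with $F$ injective, then $\dim_{\mathbb{F}_2}F(\mathcal{H}_i)=\dim_{\mathbb{F}_2}\mathcal{H}_i=r$ for every $i$ (again by Condition~1), so that codeword has weight $n+3\neq n+1$, and $\mathcal{C}_e$ is not one-weight.

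It remains to produce such a scalar triple, and this is the only step requiring an argument. Normalizing $\lambda_3=1$, we have $F(x)=x(x^3+\lambda_2 x+\lambda_1)$, so $F$ is injective exactly when the cubic $p(x)=x^3+\lambda_2 x+\lambda_1$ has no root in $\mathbb{F}_{2^m}$. Fix any $\lambda_2\in\mathbb{F}_{2^m}^*$ and set $\psi(x)=x^3+\lambda_2 x$; since $\psi(0)=0=\psi(\gamma)$, where $\gamma$ is the (nonzero) square root of $\lambda_2$, the self-map $\psi$ of the finite set $\mathbb{F}_{2^m}$ is not injective, hence not surjective, so we may choose $\lambda_1\notin\psi(\mathbb{F}_{2^m})$; then $p$ is rootless, because $p(x)=0\iff\psi(x)=\lambda_1$ in characteristic $2$, and $\lambda_1\neq 0$ because $0=\psi(0)\in\psi(\mathbb{F}_{2^m})$. (Alternatively: take any monic irreducible cubic over $\mathbb{F}_{2^m}$, reduce it to depressed form $x^3+\lambda_2 x+\lambda_1$ via the substitution $x\mapsto x+c$, which kills the quadratic coefficient in characteristic $2$ and preserves irreducibility, and note $\lambda_1\neq 0$, while $\lambda_2\neq 0$ as well since $m$ odd makes cubing a bijection of $\mathbb{F}_{2^m}^*$, so $x^3+\lambda_1$ cannot be irreducible.) I expect the construction of this rootless depressed cubic (equivalently, of an injective linearized polynomial of the prescribed shape) to be the main point; everything else is the already-proved MSRD property together with the bookkeeping of the two weights via Condition~1.
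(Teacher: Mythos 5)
Your proof is correct, and its overall architecture matches the paper's: the MSRD claim is exactly the $q=2$ specialization of Theorem~\ref{th triply extended LRS}, and the one-weight claim reduces in both cases to producing a codeword $\lambda_1\mathbf{g}_1+\lambda_2\mathbf{g}_2+\mathbf{g}_3$ whose associated linearized polynomial $F(x)=\lambda_1 x+\lambda_2 x^2+x^4=x(x^3+\lambda_2 x+\lambda_1)$ has trivial kernel, i.e.\ to finding $\lambda_1,\lambda_2\in\mathbb{F}_{2^m}^*$ with $x^3+\lambda_2 x+\lambda_1$ rootless. Where you genuinely diverge is in how you produce that rootless depressed cubic. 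The paper counts irreducible cubics over $\mathbb{F}_{2^m}$ via $N_{2^m}(3)=(2^{3m}-2^m)/3$ to guarantee one whose coefficients survive the depressing substitution $x\mapsto x+a$ with both resulting coefficients nonzero. Your primary argument replaces all of this with a two-line pigeonhole: for fixed $\lambda_2\neq 0$ the map $\psi(x)=x^3+\lambda_2 x$ kills both $0$ and $\lambda_2^{1/2}$, hence is not surjective, and any $\lambda_1$ outside its image (automatically nonzero) works. This is more elementary, avoids the counting estimate entirely, and does not even require irreducibility of the cubic --- only the absence of a root, which is all that is actually used. Your parenthetical alternative (depress an arbitrary irreducible cubic and rule out $\lambda_2=0$ by noting that cubing is a bijection of $\mathbb{F}_{2^m}^*$ for odd $m$) is essentially the paper's route, but cleaner, since it derives the nonvanishing of the coefficients directly from irreducibility rather than from a counting argument. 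One small additional merit of your write-up: you explicitly exhibit the weight-$(n+1)$ codeword $\mathbf{g}_1$ alongside the weight-$(n+3)$ one, whereas the paper relies on the minimum distance being $n+1$ (and in doing so writes the minimum distance and the target weight with a notational slip, using $n$ for the full length $n+3$); your bookkeeping is the more transparent of the two.
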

\begin{proof}
The fact that $ \mathcal{C}_e $ is MSRD for $ {\rm d}_e $ is Theorem \ref{th triply extended LRS}. Now, since $ {\rm d}_e(\mathcal{C}_e) = n-2 $, it is enough to show that there exists a codeword $ \mathbf{c} \in \mathcal{C}_e $ with $ {\rm wt}_e (\mathbf{c}) = n $. For $ \lambda, \nu \in \mathbb{F}_{2^m}^* $, let 
$$ \mathbf{c}_{\lambda,\nu} = ( \lambda \beta_1 + \nu \beta_1^2 + \beta_1^4, \ldots, \lambda \beta_{\mu r} + \nu \beta_{\mu r}^2 + \beta_{\mu r}^4, \lambda, \nu , 1 ) \in \mathcal{C}_e. $$
Since $ \lambda \neq 0 \neq \nu $, it holds that $ {\rm wt}_e (\mathbf{c}_{\lambda,\nu}) < n $ if, and only if, there exists an index $ i \in [\mu] $ and scalars $ \lambda_1, \lambda_2, \ldots, \lambda_r \in \mathbb{F}_2 $, not all zero, such that
$$ \sum_{j=1}^r \lambda_j \left( \lambda \beta_{(i-1)r+j} + \nu \beta_{(i-1)r+j}^2 + \beta_{(i-1)r+j}^4 \right) = 0. $$
By considering $ \beta = \sum_{j=1}^r \lambda_j \beta_{(i-1)r+j} \in \mathbb{F}_{q^m}^* $, we have that $ {\rm wt}_e (\mathbf{c}_{\lambda,\nu}) < n $ if, and only if, there exists an index $ i \in [\mu] $ and $ \beta \in \mathcal{H}_i = \langle \beta_{(i-1)r+1}, \beta_{(i-1)r+2}, \ldots, \beta_{ir} \rangle_{\mathbb{F}_2} \setminus \{ 0 \} $ such that $ \lambda \beta + \nu \beta^2 + \beta^4 = 0 $, that is, $ \beta^3 + \nu \beta + \lambda = 0 $. 

Now, since by \cite[Th. 3.25]{lidl} there are $ N_{2^m}(3) = (2^{3m} - 2^m)/3 > 2 \cdot 2^{2m} $ irreducible polynomials in $ \mathbb{F}_{2^m}[x] $, then there is at least one irreducible polynomial $ f = x^3 + a x^2 + b x + c \in \mathbb{F}_{2^m}[x] $ such that $ b \neq a^2 $ and $ b \neq 1 $. Furthermore, $ c \neq 0 $ since $ f $ is irreducible. Define $ g = f(x+a) = x^3 + (a^2+b)x + c(b+1) $, which is irreducible since so is $ f $. Let $ \nu = a^2+b $ and $ \lambda = c(b+1) $, which satisfy $ \lambda \neq 0 \neq \nu $. Since $ g $ is irreducible of degree $ 3 $, there is no $ \beta \in \mathbb{F}_{2^m} $ such that $ g(\beta) = \beta^3 + \nu \beta + \lambda = 0 $. In other words, the codeword $ \mathbf{c}_{\lambda,\nu} \in \mathcal{C}_e $ as above satisfies $ {\rm wt}_e (\mathbf{c}_{\lambda,\nu}) = n $, and we are done.
\end{proof}

\section*{Acknowledgement}

The author gratefully acknowledges the support from a Mar{\'i}a Zambrano contract by the University of Valladolid, Spain (Contract no. E-47-2022-0001486).

\footnotesize
 
%\nocite*{}
%\bibliographystyle{plain}
%\bibliography{doubly} 

\end{document}